\newcommand{\am}{}
\newcommand{\revOne}{}
\newcommand{\revTwo}{}
\begin{document}

\title{Remote Monitoring of Two-State Markov Sources via Random Access Channels: an Information Freshness vs. State Estimation Entropy Perspective}
\author{Giuseppe Cocco, \IEEEmembership{Senior Member, IEEE}, Andrea Munari, \IEEEmembership{Senior Member, IEEE},\\ Gianluigi Liva, \IEEEmembership{Senior Member, IEEE}
\vspace{-1.5em}
\thanks{All authors contributed equally to this work. Part of the results in this paper have been presented at the IEEE Information Theory Workshop (ITW), April 23-28 2023, Saint-Malo, France.

G. Cocco is with the Signal Theory and Communications Department, Universitat Politecnica de Catalunya (UPC), Barcelona (Spain), and with the Department of Information and Communication Technologies, Universitat Pompeu Fabra. (email: giuseppe.cocco@upc.edu)\\ A. Munari and G. Liva are with the Institute of Communications and Navigation, German Aerospace Center (DLR), Wessling, Germany (email: \{andrea.munari, gianluigi.liva\}@dlr.de)
}
\thanks{The work of G. Cocco was supported by the Ramon y Cajal fellowship program (grant RYC2021-033908-I) funded by 
MCIN/AEI/10.13039/501100011033 and by the European Union ``NextGenerationEU'' Recovery Plan for Europe, by the Secretary of Universities and Research (Catalan Government) under a Beatriu de Pin\'{o}s fellowship, and by the European Union's Horizon 2020 research and innovation programme under the Marie Sk{\l{}}odowska-Curie grant agreement 801370.

A. Munari and G. Liva acknowledge the financial support by the Federal Ministry of Education and Research of Germany in the programme of ``Souver\"an. Digital. Vernetzt.'' Joint project 6G-RIC, project identification number: 16KISK022.
}
}

\maketitle
\thispagestyle{empty}

\markboth
    {G. Cocco et al.: Remote Monitoring of Two-State Markov Sources via Random Access Channels}
    {G. Cocco et al.: Remote Monitoring of Two-State Markov Sources via Random Access Channels}

\begin{abstract}
We study a system in which two-state Markov sources send status updates to a common receiver over a slotted ALOHA random access channel. We characterize the performance of the system in terms of state estimation entropy (SEE), which measures the uncertainty at the receiver about the sources' state. Two channel access strategies are considered: a \emph{reactive} policy that depends on the source behaviour and a \emph{random} one that is independent of it. We prove that the considered policies can be studied using two different hidden Markov models and show through a density evolution analysis that the reactive strategy outperforms the random one in terms of SEE while the opposite is true for age of information. Furthermore, we characterize the probability of error in the state estimation at the receiver, considering a maximum a posteriori and a low-complexity (decode $\&$ hold) estimator. Our study provides useful insights on the design trade-offs that emerge when different performance metrics are adopted. Moreover, we show how the source statistics significantly impact the system performance.
\end{abstract}

\section{Introduction}\label{sec:intro}

\IEEEPARstart{M}{onitoring} the state of remotely-deployed nodes in wireless sensor networks is one of the possible applications of \ac{IoT} systems. Such use cases are typically characterized by the presence of a large number of battery-powered, low-complexity devices which sense an underlying process and send updates to a common receiver over a shared channel in an often sporadic and unpredictable fashion. In these settings, grant-based solutions that require channel negotiation and reservation procedures to allocate resources prior to data delivery tend to be highly inefficient, and uncoordinated access protocols based on variations of the well-known ALOHA scheme \cite{Abramson77:PacketBroadcasting} are commonly employed to enable connectivity \cite{LoRa}. 

The main goal in remote-monitoring IoT applications is to maintain an accurate knowledge at the receiver of the status of the sensed processes. The task is in general not trivial, as it jointly depends on how nodes generate (relevant) readings, as well as on the latency experienced by packets sent in the network, and it becomes especially challenging in the presence of a distributed channel contention. Important steps towards characterizing the problem were taken with the definition of some relevant performance indicators. A pioneering role in this sense was played by the \ac{AoI}, originally introduced in the context of vehicular communications \cite{Kaul11_SECON}\cite{kaul_infoco2012}. The metric is defined as the time elapsed since the generation of the last received update for a process of interest, and captures how fresh the knowledge available at the receiver is. By virtue of its simple definition and mathematical tractability, \ac{AoI} has received a lot of research attention \cite{Yates20_Survey}, allowing to identify some fundamental trade-offs and protocol design principles that depart from those obtained using classical metrics such as throughput or latency. While initial works focused on point-to-point links, e.g. \cite{Ephremides16_TIT,Yates17_ISIT,Durisi19_JSAC,sun_poly_TIT_2020,Telatar20_TIT} among the vast available literature, important results were recently obtained also for multiple sources \cite{Yates20_TIT,Yates19_TIT,Modiano19_TNET,Pappas19,Ephremides19_Infocom,Ephremides20_CSMA},  providing fundamental insights on the behavior under ALOHA-based contention \cite{Yates17:AoI_SA,Yates20_ISIT,Uysal21_AlohaThresh,Shirin22_TIT}.

By definition, AoI is oblivious of the content of the status updates being sent, focusing only on the time of their generation, and defines a penalty that continues to grow in the absence of new incoming messages even if the monitored process does not change its state. As such, the metric  may fall short in accurately capturing the uncertainty experienced at the receiver, especially in the presence of non-memoryless sources. To overcome this limitation, alternative performance indicators have recently been proposed, such as age of incorrect information \cite{Ephremides20_TNET} or value of information \cite{Soleymani20_valueInfo}. In the first case, a (possibly non-linear) penalty is undergone only if the estimate available at the receiver is not sufficiently precise (e.g., it differs from the actual state of the source). Similarly, value of information aims at measuring the relevance of received updates towards improving the estimate of a process for a specific task. In this context, a metric of particular interest is the \ac{SEE} \cite{Rezaeian:TAC}, which quantifies the uncertainty in the knowledge of the sources' state at the sink, based on current and past channel outputs as well as on the source model. From this standpoint, while recent results were obtained in scheduled multi-user setups \cite{Liew22_TIT,Liew22_arXiv}, the behavior in random access systems is still largely unexplored. 

In this paper, we provide a contribution in this direction by detailing and extending the initial analysis presented in \cite{Cocco23_ITW}. In particular, we consider a system in which nodes monitoring two-state Markov sources communicate towards a common receiver over a slotted ALOHA channel without feedback. Under the assumption of destructive collisions, we investigate the ability of the system to acquire accurate estimates of the state of the sources at the receiver, focusing on the SEE metric. 
We consider two variations of the ALOHA access: a random transmission strategy, where the nodes send updates of their state with a fixed probability in each slot according to a Bernoulli process, and a reactive transmission strategy, where an update is sent only when a change of state in the underlying Markov source is detected. \revOne{For the first one, we show that the sampling parameters that minimize AoI, corresponding to throughput maximization \cite{kaul_infoco2012,Yates19_TIT}, also minimize SEE. The result is in line with what observed in \cite{sun_poly_TIT_2020}, and confirms that, under a random sampling policy, AoI is a good proxy for system design. On the other hand, our study reveals that a reactive approach can drastically reduce the \ac{SEE}, highlighting the importance of an access strategy that is tuned to the process being monitored. The underlying intuition follows the observation that, when the value of the monitored process is relevant, reporting only state changes makes transmissions more informative and helps reducing congestion, favoring delivery of informative updates to the receiver.}

The key contributions of the present work can be summarized as follows:
\begin{itemize}
    \item We provide an analytical characterization of the SEE for both the random and the reactive transmission strategies. In particular, we show that these approaches can be modeled using two distinct hidden Markov models. Moreover, we provide an efficient evaluation of the \ac{SEE} via a density evolution analysis \cite{RU01a}\cite[Chapter 4]{Pfister2003}, with a complexity that grows only quadratically in the number of nodes.
    \item Leaning on this, we study the behavior of the access schemes in the case of both symmetric and asymmetric sources (i.e., with different transition probabilities between the two available states). In the latter case, we propose an approximated model to simplify the DE analysis, and show by means of simulations that it provides a tight match in terms of the SEE. Comparing the trends obtained for SEE and average AoI, the trade-offs induced by a reactive transmission approach are thoroughly discussed. From this standpoint, our work provides some useful hints for protocol design in IoT monitoring systems. 
    \item \revOne{For both transmission strategies, we also study in Appendix \ref{sec:dh} the state estimation error probability achieved by a MAP estimator and by a simpler solution, dubbed decode and hold (D\&H). This strategy only updates the state estimate upon successfully receiving a message informing the receiver about the state of the tracked source. Our analysis shows that D\&H offers performance comparable to that of a MAP estimator when symmetric sources are monitored, whereas it exhibits a significant gap in the asymmetric case. }
\end{itemize}

\subsection{Related Works}
The monitoring of one or more sources through an unreliable channel has received relevant research attention in recent years. 

In the context of point-to-point channels, several works approached the problem of estimating the state of a single source. Among them, \cite{sun_poly_TIT_2020} studies the optimal sampling strategy that minimizes the mean square estimation error of a Wiener process under a sampling rate constraint, considering transmissions that incur queuing delay, and highlighting ties with an AoI-based optimization under specific conditions. In turn, \cite{gaoACC_2016} proposes an optimal transmission policy for a single sensor observing a stochastic source and transmitting the observation through a noisy channel. The considered approach takes into account past observations and decisions of the sensor, showing that a threshold-in-threshold policy is optimal under some conditions.  Joint sampling and transmission strategies for $N$-state Markov sources are tackled in \cite{Pappas23_arXiv}, taking into account performance metrics such as the estimation error probability and the cost that an estimate error might have on actuation. Scheduling policies in a battery-constrained energy harvesting monitoring system are studied in \cite{Nayyar12_Auto} for  Markov as well as Gaussian sources under different estimation distortion metrics.

Interesting results were recently derived also in multiple-access settings. In \cite{chen_liao_2022} a random access system with feedback in which each transmitter observes a different source, modelled as a random walk process, is studied. The Authors consider two possible strategies: an oblivious one \--- akin to our random transmission approach \--- and a non-oblivious one, which triggers a transmission only when the discrepancy between the current source value and the knowledge at the receiver exceeds a threshold. AoI and  a weighted sum of the squared estimation errors are employed to evaluate the policies.  In \cite{rezaeianPercom2007} a wireless sensor network is studied using the \ac{SEE} as the loss function to be minimized through optimal scheduling.  A similar setup is also considered in \cite{rezaeianArxiv2006}. Important contributions were provided lately in \cite{Liew22_TIT,Liew22_arXiv}, where the Authors investigate a system that monitors multiple binary Markov processes. Studying the uncertainty of information (UoI), i.e. the entropy of a tracked process conditioned on the latest observation, the minimization of the average sum UoI is cast onto a restless multi-armed bandit problem, deriving optimal scheduling strategies.
In \cite{ambrosinoAllerton2008} the Authors study the problem of remote estimation of a discrete-time linear dynamical source observed by multiple sensors that access a common receiver through a random access channel, tackling the trade-off between number of transmitting nodes and estimation accuracy. Finally, \cite{JACQUET2008203,luo_TIT2009} consider the problem of estimating the entropy rate of a hidden Markov model. 
\revOne{Going beyond these works, our contribution focuses on the performance at system level of different sampling strategies in terms of \ac{SEE}, \ac{AoI} and state estimation error probability. We consider two-state Markov sources, not necessarily symmetric, and compare how taking source behavior into account in the policy design impacts the considered performance metrics.}\\

The remainder of the paper is organized as follows. Sec.~\ref{sec:preliminaries} introduces the system model and the considered metrics. In Sec.~\ref{sec:optimumEstimation} we discuss the optimal state estimation approach, describing the hidden Markov models for both the random and the reactive strategies, whereas a density evolution approach to efficiently evaluate their performance is presented in Sec.~\ref{sec:de}. The fundamental trends and trade-offs of the considered strategies in terms of AoI and SEE are presented and discussed in Sec.~\ref{sec:results} by means of numerical results, followed by some concluding remarks in Sec.~\ref{sec:conclusions}.  To complement our study, we tackle in Appendix~\ref{sec:dh} the behavior in terms of state estimation error probability of a MAP estimator as well as of the simpler D\&H solution. Finally, Appendix~\ref{sec:appendix} provides details on some useful calculations.

\section{System Model and Preliminaries} \label{sec:preliminaries}

\subsection{Notation}
We use capital letters for \acp{r.v.}, and lowercase letters for their realizations. The probability of an event $\{X=x\}$ is denoted as $\prob{X=x}$, and the probability \revOne{mass} function of the \ac{r.v.} $X$ as $P(x)=\prob{X=x}$. For discrete-time, finite-state Markov chains, we denote the one-step transition probability from state $i$ to state $j$ as $q_{ij}$, and the stationary probability of state $i$ as $\pi_i$. 

\subsection{System Model}
We focus on a system with $M$ statistically independent sources (nodes) that share a common wireless channel towards a receiver. Time is slotted, and all nodes are assumed to be slot-synchronized. A source \mbox{$k\in \{0,1,\dots,M-1\}$} generates a random sequence of symbols 
$$X_0^{(k)}\, X_1^{(k)}\, X_2^{(k)}\, \ldots
$$
where $X_n^{(k)}$ belongs to the alphabet $\sourcealpha=\{0,1\}$, and represents the (random) state of the source at time (slot index) $n$. Each node is modelled as a two-state stationary Markov chain with transition probabilities
$
\tp{ij}=\mathsf{P}\left[X_n^{(k)} = j\,|\, X_{n-1}^{(k)}=i\right]
$
for all $(i,j)\in\sourcealpha\times\sourcealpha$, as reported in Fig.~\ref{fig:source}. We denote the steady-state distribution of the process as
\begin{align} 
 \pi_0 = \frac{\tp{10}}{\tp{10} + \tp{01}} \,, \quad \pi_1 = \frac{\tp{01}}{\tp{10} + \tp{01}}.
 \label{eq:statDist_MC_source}
\end{align}

\begin{figure}
\centering
    \includegraphics[width=.4\textwidth]{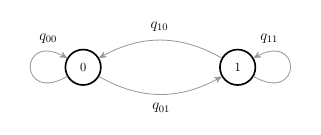}
    \vspace{-5mm}
\caption{Two-state Markov model for a generic source in the system.}
\label{fig:source}
\end{figure}

Every node in the network can transmit  update packets over the shared channel, reporting information on the state of its source to the receiver. In this work we focus on random access medium sharing policies, commonly employed in practical settings, and consider two variations of the slotted ALOHA protocol \cite{Abramson77:PacketBroadcasting} presented in detail in Sec.~\ref{sec:policies}. 
Accordingly, three possible slot outcomes can be seen at the receiver: i) \emph{idle}, i.e., no transmission is performed; ii) \emph{singleton}, i.e., only one source has transmitted; iii) \emph{collision}, i.e., two or more packets were sent concurrently. In the remainder, we assume the well-known collision channel model \cite{Abramson77:PacketBroadcasting}, so that the content of a status update is correctly received whenever sent over a singleton slot, whereas no packet can be decoded in the presence of a collision. 

We further make some additional assumptions inspired by practical systems. First, we consider that the receiver is able to detect a collision when it takes place but does not have knowledge of the number of involved packets.
Second, each transmission contains an identifier of the source, and the receiver becomes aware of the current value of a process upon decoding a packet from the corresponding source. \revTwo{Finally, no feedback is provided, so that a node is not aware of the outcome of its delivery attempt, nor can estimate the current channel load. Accordingly, no retransmission is performed.\footnote{This setup is commonly employed in many practical IoT systems, e.g. LoRaWAN \cite{LoRa}, where sensing tasks are performed by simple, battery-powered nodes that operate without feedback from the receiver.}}

Without loss of generality, we consider as reference the source with index $k=0$ and  drop the superscript in $X_n^{(k)}$. Thus, the sequence of symbols generated by the reference source will be denoted by
$$
	X_0\, X_1\, X_2\, \ldots
$$
and the receiver observes the random output sequence
$$
Y_0\,\, Y_1\,\, Y_2\,\ldots
$$
where $Y_n$ belongs to the  alphabet \mbox{$\outalpha=\{0,1,\idle,\collision,\izero,\ione\}$}. Here, $0$ and $1$ denote a collision-free observation of the corresponding state of the reference source, $\idle$ denotes an idle slot, $\collision$ denotes a collision, and $\izero,\ione$ denote a collision-free observation of the state of any of the other sources (with index in $\{1,\dots,M-1\}$), where $\izero$ represents the ``zero'' state and $\ione$ the ``one'' state.

\subsection{Transmission Strategies} \label{sec:policies}

In the remainder of our study we compare two distinct transmission strategies. In spite of their simplicity, their analysis allows to capture some fundamental trade-offs of the considered system, providing relevant insights.

\textbf{Random transmission strategy}: In the first approach, each source randomly decides at each slot whether to transmit a status update, with \emph{activation probability} $\alpha$, or to remain silent, with probability $1-\alpha$. The decision is made independently of the evolution of the source process, as well as across slots. We will refer to this approach as \emph{random transmission strategy}. In this case, the number of nodes accessing the channel over a slot follows a binomial distribution of parameters $M$ and $\alpha$. Accordingly, the probability for the reference source to deliver an update at time $n$ is given by
\begin{align}
\pup = \alpha (1-\alpha)^{M-1}
\label{eq:pup_random}
\end{align}
capturing the probability that a message is transmitted and does not undergo a collision.

\textbf{Reactive transmission strategy}: The second approach we tackle foresees a terminal accessing the channel over a slot only if a state change in the corresponding source takes place. Otherwise, no transmission is performed. For such solution, the probability for a source to deliver an update over a generic slot can be approximated as 
\begin{align}
\pup \simeq \tilde{\alpha} \,(1-\tilde{\alpha})^{M-1}
\label{eq:pup_reactive}
\end{align}
where
\begin{align}
    \tilde{\alpha} := \pi_0 \qzo + \pi_1 \qoz = \frac{2 \qzo \qoz}{\qzo+\qoz}
    \label{eq:myopic_def}
\end{align}
is the activation probability for a node. Equation \eqref{eq:pup_reactive} is exact in the symmetric case $\qzo=\qoz$, and only provides an approximation otherwise. Indeed, when $\qzo \neq \qoz$, the behavior of a node across subsequent slots is no longer i.i.d, as it depends on the current state of the source. Further details on this approximation will be discussed in Sec.~\ref{sec:myopicest}, showing that it leads to very accurate estimates of the metrics of interest. 

Some additional remarks are in order for the reactive strategy. First, in this case the access probability is fully determined by the source statistics.\footnote{To attain more flexibility, one may conceive modified strategies, introducing a probability of transmission in presence of a state change (to lower the channel load) or a probability to perform additional transmissions even in absence of state change. These modifications will not be considered in this work.} Moreover, this policy triggers a key trade-off. On the one hand, avoiding transmission of state information if no change at the source is observed may reduce the channel congestion, with beneficial effects on the overall packet delivery probability. On the other hand, the unavoidable collisions in a random access setting entail the risk of not providing updates for a long time when the source rarely changes state. Finally, it is important to observe how collisions or idle slots carry information about the state of the sources, as the access strategy intrinsically depends on whether the tracked processes experience a transition. 

\subsection{State Estimation Entropy and Error Probability}

\begin{figure}
    \centering
    \hspace{-1em}
    \subfloat[symmetric source]{
    \includegraphics[width=0.45\textwidth]{./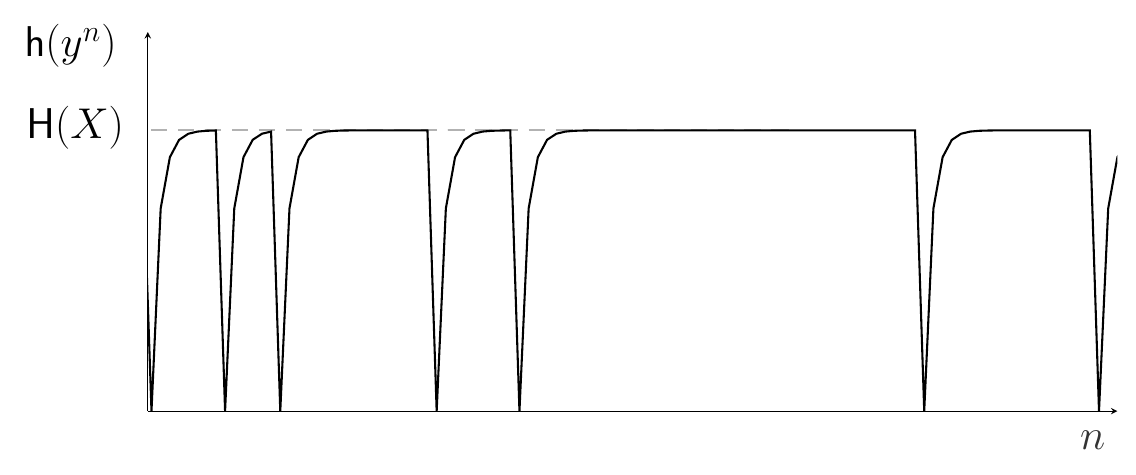}
    \label{fig:timeline_symmetric}
    }
    \vspace{1em}
    \subfloat[asymmetric source]{
    \includegraphics[width=0.45\textwidth]{./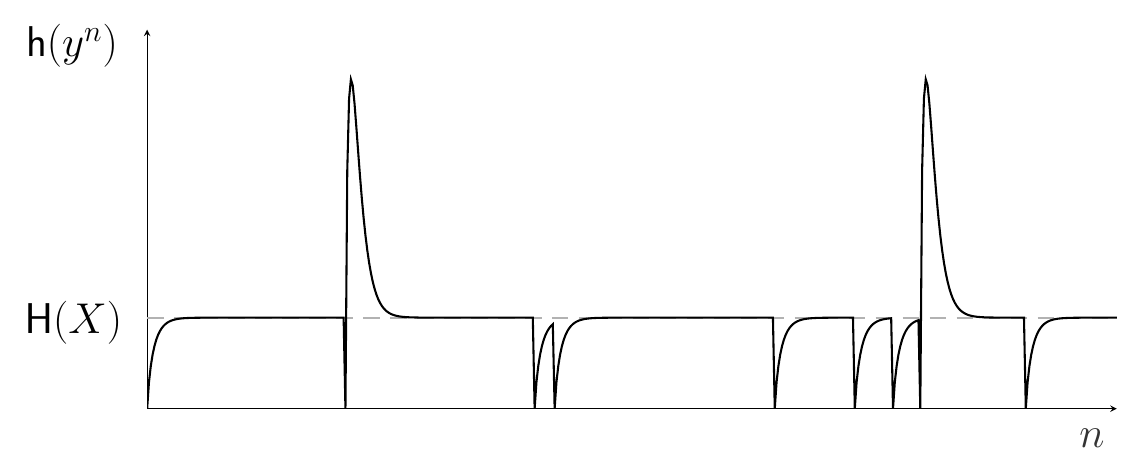}
    \label{fig:timeline_asymmetric}
    }
    \caption{Example of time evolution of $\ent(y^n)$ for a single source implementing a random transmission policy. The metric is reset to a zero whenever an update is delivered, and converges to the entropy of the source, $\ENT(X)$, when the receiver does not obtain any information for long periods of time. Subfigure (a) reports the case of a symmetric source ($\qzo=\qoz$), whereas (b) shows the behavior of an asymmetric source ($\qzo \neq \qoz$). Details are discussed in Example $1$.}
    \label{fig:timelineSEE_single}
\end{figure}

Let us denote by $Y^n$ the random vector containing the output sequence from $0$ to the current time $n$, and by $y^n$ its realization. The uncertainty experienced at the receiver about the present state of the monitored source can be conveniently captured by the entropy\footnote{The notation $\ent(\cdot)$ should not be confused with the one that is sometimes used for differential entropy.}
\begin{align}
\ent(y^n):&=\ENT(X_n|Y^n=y^n) \\
&= -\sum_{x_n\in\sourcealpha} P(x_n\given y^n) \, \log_2 P(x_n \given y^n).
\label{eq:hyn}
\end{align}

\begin{example}
To get preliminary insights on this metric, consider the case of a single source in the absence of channel contention (i.e., $M=1$, no collisions). The node follows the random transmission policy, sending (successful) updates with a certain probability at every slot. An example of the evolution over time of $\ent(y^n)$ for a symmetric source ($\qzo=\qoz$) is reported in Fig. \ref{fig:timeline_symmetric}, showing how  the uncertainty grows until an update is received, when a reset to zero denotes exact knowledge acquired at the receiver on the status of the tracked source. Note that, in the absence of refreshes, $\ent(y^n)$ approaches the entropy of the stationary distribution of the source \mbox{$\mathsf H(X) = -\pi_0 \log_2 \pi_0 - \pi_1 \log_2 \pi_1$}. The situation changes for an asymmetric source, as illustrated in Fig.~\ref{fig:timeline_asymmetric} for the case $\qoz=0.2$ and $\qzo=0.01$, corresponding to a stationary distribution $\pi_0=0.047$, $\pi_1=0.953$. Note indeed that, when an update is delivered informing that the source is in state $0$, $\ent(y^n)$ grows slowly, in view of the low probability of state transition. \revOne{For instance, the uncertainty at the receiver at the end of an idle slot following the update reception is given by $\ENT(X_n \given X_{n-1} = 0) = -\qzz \log_2 \qzz - \qzo \log_2 \qzo = 0.0808$. } Conversely, if the receiver is informed that $X$ has reached state $1$, a higher uncertainty follows in the subsequent slots, progressively reducing to converge to $\ENT(X)$ in the absence of updates. 
\revOne{In the latter example, the uncertainty after the first idle slot is ${\ENT(X_n \given X_{n-1} = 1)} = -\qoz \log_2 \qoz - \qoo \log_2 \qoo = 0.7219$, leading to the higher peaks shown in the plot.}

Two further remarks are in order. First, for the single source case, $\ent(y^n)$ is identically $0$ when a reactive transmission strategy is implemented, as the receiver can perfectly track the state of the source. Second, it is to be pointed out that the behavior of the metric becomes more involved when multiple nodes contend for the channel. In this case,  the uncertainty on the tracked source varies differently based on the outcomes observed over the slots, as well as on the implemented transmission strategy. More details will be discussed in the following sections.
\end{example}

To characterize the system performance we aim at deriving the distribution of the \ac{r.v.} 
\begin{align}
H_n := \ent(Y^n) \label{eq:Hn}
\end{align}
and, in particular, its mean value $\expect{H_n}=\ENT(X_n|Y^n)$.
More specifically, we are interested in the evaluation of the limiting behavior of such quantity as $n \rightarrow \infty$, denoted as $H_{\infty}$ and referred to as \emph{average state estimation entropy}. 
We also note that $H_{\infty}$ coincides with the time average
\begin{equation}\label{eq:Hinf1}
\lim_{N\to\infty} \frac{1}{N}\sum_{n=0}^{N-1} \ENT(X_n|Y^n).
\end{equation}
This follows by observing that ${(X_n,Y_n)}$ is a stationary stochastic process, hence $\ENT(X_n|Y^n)$ is monotonically non-increasing and  converges to a limit. The limit coincides with $H_\infty$ by the Ces\'{a}ro mean Theorem \cite[Theorem 4.2.3]{coverThomas}.

\revOne{In Appendix \ref{sec:dh}}, we also tackle the problem of estimating the reference source state at the receiver. In this context, consider a generic state estimator for $X_n$, and denote the estimate as $\est_n$. We introduce the state estimation error probability at time $n$ as
\[
P_e^{(n)} = \mathsf{P}\big[\hat{X}_n \neq X_n\big]
\]
and denote the time average of the sequence $P_e^{(n)}$ by
\begin{align}
P_e = \lim_{N\to \infty}\frac{1}{N}\sum_{n=0}^{N-1} P_e^{(n)}.
\label{eq:Pe_def}
\end{align}

\subsection{Age of Information} \label{sec:aoi_metric}
As a reference benchmark for our study, we \revOne{also analyze} the performance of the presented schemes in terms of \ac{AoI}. 
The metric, originally proposed in \cite{Kaul11_SECON}, is a well-established measure for the notion of information freshness, capturing how outdated the knowledge about the state of a source is at the destination. To introduce this quantity, we assume each status update to contain a time stamp, denoting the instant at which the message was generated. Accordingly, the current AoI for the source of interest at time $t$ is defined as
\begin{align}
    \Delta(t) := t - \sigma(t)
\end{align}
where $\sigma(t)$ is the time stamp of the last successfully received update from the node. Leaning on this definition, $\Delta(t)$ follows a sawtooth profile, growing linearly over time and being reset each time an update is received, as exemplified in Fig.~\ref{fig:timelineAoI}. For the setting under study, we assume that the  time stamp of a message corresponds to the the start of that slot it is sent over, so that the \ac{AoI} falls to one slot duration if the packet is successfully decoded, accounting for the transmission and reception time of the message over the channel. 
Via simple arguments, the stochastic process $\Delta(t)$ can be shown to be ergodic (see, e.g. \cite{Munari21_TCOM_AoI}), and we will focus in the remainder on its average value
\begin{align}
    \bar{\Delta} := \expect{\Delta(t)} = \lim_{N\rightarrow\infty}\frac{1}{N} \sum_{n=0}^{N-1} \Delta(n)
\end{align}
where $n = \lfloor t \rfloor$.
For a slotted ALOHA access, assuming independent behavior of all nodes across slots, $\bar{\Delta}$ takes the simple form \cite{Munari21_TCOM_AoI}
\begin{align}
    \bar{\Delta} = \frac{1}{2} + \frac{1}{\pup}
    \label{eq:aoi_SA}
\end{align}
where $\pup$ is given for the random and reactive transmission strategies in \eqref{eq:pup_random} and \eqref{eq:pup_reactive}, respectively. Note that, from \eqref{eq:aoi_SA}, the metric is minimized by maximizing $\pup$. In other words, for slotted ALOHA the optimal strategy coincides with throughput maximization.

\begin{figure}
    \centering
    \includegraphics[width=.45\textwidth]{./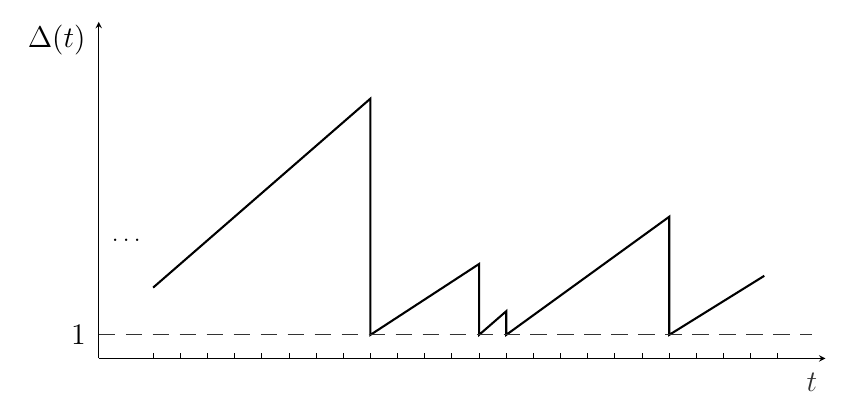}
    \vspace{-3mm}
    \caption{Example of time evolution of age of information for a source of interest at the destination. The metric is reset to a value of one slot whenever an update is received, and grows linearly otherwise.}
    \label{fig:timelineAoI}
\end{figure}
\section{\revOne{Optimum State Estimation}} \label{sec:optimumEstimation}

As a first step towards characterizing the performance of the system under study, we address the problem of estimating the state $X_n$ of the reference source at the receiver upon observing a sequence of channel outputs $y^n$. In particular, we lean on the \ac{APP} logarithmic ratio
\begin{equation}\label{eq:lambdan_def}
\lambda_n := \ln \frac{\prob{X_n=0 \given Y^n = y^n}}{\prob{X_n=1 \given Y^n = y^n}}
\end{equation}
which, combined with a threshold test ($\hat{x}_n=0$ if $\lambda_n>0$ and $\hat{x}_n=1$ otherwise), yields an optimum \ac{MAP} estimator, i.e, minimizing the state estimation error probability. 
In the remainder of this section, we introduce hidden Markov models to capture the relation between the observed channel outputs and the evolution of the reference source, and use them to develop  recursive equations to efficiently compute \eqref{eq:lambdan_def}. To this aim, we also show that the \ac{APP} is a sufficient statistics for $X_n$. The results will be used in Sec.~\ref{sec:de} to derive  \revOne{the distribution} of the r.v. $H_n$ and eventually the average state estimation entropy.

\revTwo{\begin{lemma}\label{lemma:ss}
	Assume $X_0, X_1, \ldots$ to be the random sequence of states generated by a two-state stationary Markov source, and let $Y_0, Y_1, \ldots$ be the state sequence observations. Then, $\lambda_n$ 	is a sufficient statistic for estimating $X_n$ given $Y^n$.
\end{lemma}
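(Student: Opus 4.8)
The plan is to reduce the claim to the Fisher--Neyman factorization criterion, exploiting the fact that $X_n$ is binary so that its posterior is effectively a one-dimensional object. First I would note that, since $\prob{X_n=0 \given Y^n=y^n}+\prob{X_n=1 \given Y^n=y^n}=1$, the definition \eqref{eq:lambdan_def} yields
\begin{align}
\prob{X_n=0 \given Y^n=y^n}=\frac{1}{1+e^{-\lambda_n}},\qquad
\prob{X_n=1 \given Y^n=y^n}=\frac{1}{1+e^{\lambda_n}},
\end{align}
so the entire a-posteriori law of $X_n$ is a deterministic, strictly monotone (hence invertible) function of the scalar $\lambda_n$; in particular $y^n$ influences $\prob{X_n=x \given Y^n=y^n}$ only through $\lambda_n=\lambda_n(y^n)$.

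Second, I would turn this into a statement about the likelihood. Write $p_x(\ell):=1/(1+e^{(2x-1)\ell})$ for the common value of $\prob{X_n=x \given Y^n=y^n}$ over all $y^n$ with $\lambda_n(y^n)=\ell$, and use that the source chain is stationary and (assuming $\tp{01},\tp{10}>0$) irreducible, so $\pi_x=\prob{X_n=x}>0$. By Bayes' rule,
\begin{align}
\prob{Y^n=y^n \given X_n=x}=\frac{\prob{X_n=x \given Y^n=y^n}\,\prob{Y^n=y^n}}{\prob{X_n=x}}
=\underbrace{\frac{p_x(\lambda_n(y^n))}{\pi_x}}_{=:\,g(\lambda_n(y^n),\,x)}\;\cdot\;\underbrace{\prob{Y^n=y^n}}_{=:\,h(y^n)},
\end{align}
which is precisely a factorization of the likelihood into a factor depending on the data only through $\lambda_n(y^n)$ (and on $x$) and a factor free of $x$. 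By the Fisher--Neyman factorization theorem, $\lambda_n(Y^n)$ is therefore a sufficient statistic for $X_n$.

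Because here the ``parameter'' $X_n$ is itself a random variable, the cleanest equivalent formulation — and the one I would actually record — is that $X_n \to \lambda_n(Y^n) \to Y^n$ form a Markov chain. This can be checked directly: for any $y^n$ with $\lambda_n(y^n)=\ell$,
\begin{align}
\prob{Y^n=y^n \given \lambda_n=\ell,\;X_n=x}
=\frac{\prob{Y^n=y^n}\,\prob{X_n=x \given Y^n=y^n}}{\prob{\lambda_n=\ell}\,\prob{X_n=x \given \lambda_n=\ell}}
=\frac{\prob{Y^n=y^n}}{\prob{\lambda_n=\ell}},
\end{align}
which does not depend on $x$; the key cancellation uses $\prob{X_n=x \given Y^n=y^n}=p_x(\ell)=\prob{X_n=x \given \lambda_n=\ell}$, the second equality holding because $\lambda_n$ already pins down the posterior.

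I do not anticipate a genuine obstacle: the argument is essentially the observation that a two-valued posterior is a single number and $\lambda_n$ is a faithful (monotone) re-parametrization of it. The only points requiring a little care are (i) adopting the Bayesian notion of sufficiency appropriate to a random $X_n$ — handled by the Markov-chain statement above — and (ii) the innocuous non-degeneracy assumption $\pi_0,\pi_1>0$, i.e. $\tp{01},\tp{10}>0$, needed in order to divide by $\prob{X_n=x}$; in the degenerate cases $X_n$ is deterministic and the claim is trivial.
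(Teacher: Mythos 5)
Your proposal is correct and follows essentially the same route as the paper: both invoke the Fisher--Neyman factorization theorem and obtain the factorization by writing the posterior of the binary $X_n$ as a function of $\lambda_n$ alone and then applying Bayes' rule to $P(y^n\,|\,x_n)$. Your explicit verification of the Markov chain $X_n \to \lambda_n \to Y^n$ and the remark on the non-degeneracy condition $\tp{01},\tp{10}>0$ are welcome additions but do not change the substance of the argument.
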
}
\revTwo{\begin{proof}
	Leaning on the Fisher-Neyman factorization theorem, it suffices to show that $P(y^n|x_n)$ can be written as $a(x_n,\lambda_n) \,b(y^n)$, $a(\cdot)$ and $b(\cdot)$ being non-negative functions. Following the approach in \cite[Lemma 4.7]{RU08}, we observe that
	\[
	\ln \frac{P(x_n \given  y^n)}{\prob{X_n=0 \given Y^n = y^n}}=\left\{
	\begin{array}{ll}
		0 & \text{if } x_n=0\\
		-\lambda_n & \text{if } x_n=1
	\end{array}
	\right.
	\]
	which implies that
\[
	P(x_n|y^n) = \prob{X_n=0 \given Y^n = y^n} \exp(-x_n\lambda_n).
	\]
	By Bayes' rule, we then have	
    \begin{align}
	   P(y^n|x_n) 	&= \frac{P(x_n|y^n) P(y^n)}{P(x_n)}\\
					&=  \frac{\prob{X_n=0 \given Y^n = y^n} \exp(-x_n\lambda_n) P(y^n)}{P(x_n)}\\
					&=a(x_n,\lambda_n) b(y^n)
	\end{align}
where $a(x_n,\lambda_n) = \exp(-x_n\lambda_n)/P(x_n)$
and $b(y^n) = \prob{X_n=0 \given Y^n = y^n} P(y^n)$.
\end{proof}} 

\revTwo{Lemma \ref{lemma:ss} allows to characterize the distribution of the r.v. $H_n$ defined in \eqref{eq:Hn}. Note indeed that $X_n \rightarrow \Lambda_n \rightarrow Y^n$, i.e., they form a Markov chain. Leaning on this, we have
\revOne{\begin{equation}\label{eq:def_LLR}
\ln \frac{\prob{X_n=0 \given Y^n = y^n}}{\prob{X_n=1 \given Y^n = y^n}} = \ln \frac{\prob{X_n=0 \given \Lambda_n = \lambda_n}}{\prob{X_n=1 \given \Lambda_n = \lambda_n}}
\end{equation}
and hence, owing to \eqref{eq:lambdan_def}, $\prob{X_n\!=\!x_n\! \given\! \Lambda_n\! =\! \lambda_n} = \exp(-x_n \lambda_n)/(1\!+\!\exp(-\lambda_n))$.}
\revOne{We can use the obtained probabilities to express the entropy \eqref{eq:hyn} in terms of the \ac{APP} logarithmic ratio} as 
\begin{align}
\begin{split}
\ent(y^n)&=\ENT(X_n|Y^n=y^n) 
		=\ENT(X_n|\Lambda_n=\lambda_n)\\
		&=\sum_{x_n\in\sourcealpha} \frac{\exp(-x_n \lambda_n)}{1+\exp(-\lambda_n)} \log_2 \left( \frac{1+\exp(-\lambda_n)}{\exp(-x_n \lambda_n)} \right).
\end{split}\label{eq:ent_yn} 
\end{align}
With a slight abuse of notation, we denote the leftmost term of \eqref{eq:ent_yn} as $\ent(\lambda_n)$.}

\begin{figure*}[!b]
	\setcounter{equation}{10}
	\hrulefill
	\begin{align}
		\begin{split}
				\lambda_n  &= \ln \frac{\sum_{x_{n-1}\in\sourcealpha} \prob{X_n=0 , X_{n-1}=x_{n-1}, Y^{n-1} = y^{n-1}, Y_n=y_n }}{\sum_{x_{n-1}\in\sourcealpha} \prob{X_n=1 , X_{n-1}=x_{n-1}, Y^{n-1} = y^{n-1}, Y_n=y_n }}\\[.3em]		
				  &\equal{(a)} \ln \frac{\sum_{x_{n-1}\in\sourcealpha} \prob{X_n=0 , Y_n=y_n \given X_{n-1}=x_{n-1}} P(x_{n-1}| y^{n-1})}{\sum_{x_{n-1}\in\sourcealpha} \,\prob{X_n=1 , Y_n=y_n \given X_{n-1}=x_{n-1}} P(x_{n-1}| y^{n-1})}\\[.3em]
				  &\equal{(b)} \ln \frac{P(y_n\given 0)}{P(y_n\given 1)} + \ln \frac{\sum_{x_{n-1}\in\sourcealpha} \,\prob{X_n=0 \given X_{n-1}=x_{n-1}} P(x_{n-1}| \lambda_{n-1})}{\sum_{x_{n-1}\in\sourcealpha} \, \prob{X_n=1 \given X_{n-1}=x_{n-1}} P(x_{n-1}| \lambda_{n-1})}\\[.3em]
				 &\equal{(c)} \ln \frac{P(y_n\given 0)}{P(y_n\given 1)} + \ln \frac{\tp{00} + \tp{10}\exp(-\lambda_{n-1}) } {\tp{01} + \tp{11}\exp(-\lambda_{n-1})}
				  =: f(y_n,\lambda_{n-1}). 
		\end{split} \label{eq:lambda_rnd}
	\end{align}
\end{figure*}
\setcounter{equation}{9}

\revTwo{\begin{remark}
By observing that
$H_n=\ent(\Lambda_n)$ 
we see that the distribution of the \ac{r.v.} $H_n$ can be derived from the distribution of the \ac{APP} logarithmic ratio $\Lambda_n$. 
\end{remark}}

\revTwo{We now focus on deriving a recursive formulation to obtain the \ac{APP} logarithmic ratio $\lambda_n$ as a function of its previous value $\lambda_{n-1}$ and of the channel observation $y_n$, for both the random and reactive transmission strategies. The recursive formulation is based on \acp{HMM}. In particular, the statistical relation between the output sequence $Y^n$ and the reference source sequence $X^n$ can be suitably described via different \acp{HMM}, depending on the transmission strategy adopted by the nodes, as described next.}

\vspace{-1em}
\revTwo{\subsection{Random Transmission Strategy}}
\revTwo{\subsubsection{Hidden Markov Model}
In this case, the observation of channel outputs in $\{\collision,\izero,\ione\}$ can be assimilated to the observation of an idle slot, i.e., the knowledge of the state of the other sources does not provide information about the state of the reference source (we will see that this is not true for the reactive policy). From this and the memoryless nature of the access strategy, it follows that the statistical relation between the output sequence $Y^n$ and  $X^n$ is fully characterized by the conditional probability function 
\begin{align}
P(Y^n|X^n)=\prod_{\ell=0}^n P(y_\ell\given x_\ell).
\label{eq:condProb_HMM_random}
\end{align}
The distribution in \eqref{eq:condProb_HMM_random} can readily be derived leaning on the activation probability $\ptx$ of the nodes, as well as of the underlying Markov process $X^n$, as detailed in Appendix~\ref{sec:appendix}.}

\subsubsection{Recursive APP Logarithmic Ratio Calculation}
\revOne{We can rewrite} \eqref{eq:lambdan_def} as
\begin{align}
\lambda_n = \ln \frac{\prob{X_n=0 , Y^n = y^n }}{\prob{X_n=1 , Y^n = y^n}}.\label{eq:lambdan}
\end{align}
Following the well-known steps for the derivation of the forward-backward algorithm recursions over the presented \acp{HMM}  (see e.g. \cite{BCJR,RabinerHMM}), we obtain \eqref{eq:lambda_rnd} reported at the end of the page.
In the expression, (a) follows by \revOne{conditioning on $X_{n-1}$ and $Y^{n-1}$} and by observing that  $(X_n, Y_n)$ are independent on $Y^{n-1}$ once we condition on $X_{n-1}$. Similarly, (b) follows by application of  Bayes' Theorem and by observing that  $Y_n$ is independent on $X_{n-1}$ once we condition on $X_n$. Moreover, since $\lambda_{n-1}$ is a sufficient statistic for $x_{n-1}$, we can replace $P(x_{n-1}|y^{n-1})$ with $P(x_{n-1}|\lambda_{n-1})$. Finally, (c) is obtained by introducing the Markov source transition probabilities, and by noting that $P(x_{n-1}|\lambda_{n-1}) \propto \exp(-x_{n-1}\lambda_{n-1})$, as derived after Lemma $1$.
The recursion is hence defined via \mbox{$\lambda_n = f(y_n,\lambda_{n-1})$}, and its evaluation complexity is independent of the number of sources.
\setcounter{equation}{11}

\medskip

\revTwo{\subsection{Reactive Transmission Strategy}}

\revTwo{We consider next the reactive transmission strategy. As before, we first introduce the \ac{HMM}, followed by the derivation of the recursive \ac{APP} logarithmic ratio. In addition, we provide an alternative low-complexity (sub-optimal) state estimation algorithm.}

\medskip

\revTwo{\subsubsection{Hidden Markov Model}\label{sec:HMM_reactive}
The observation of channel outputs in $\{\idle,\collision,\izero,\ione\}$ should be used in this case to refine the estimate of the state of the reference source. To see why this is true, let us consider the following example.}

\revTwo{\begin{example}
Assume the case $M=3$, with sources being driven by the transition probabilities $\tp{00}=0.1$, $\tp{01}=0.9$, $\tp{10}=0.1$, $\tp{11}=0.9$, and following a reactive strategy. Note that a state change, and hence a transmission, is much more probable if the past state is $0$. Suppose now that the receiver observes at time $n-1$ the output $Y_{n-1}=0$ (the past state of the reference source is $0$), whereas a collision is experienced at slot $n$, i.e. $Y_n=\collision$. Consider the following different situations for the other two sources at time $n-1$:
 \begin{itemize}
 	\item[a)]  $X^{(1)}_{n-1}=0$ and $X^{(2)}_{n-1}=0$;
 	\item[b)]  $X^{(1)}_{n-1}=1$ and $X^{(2)}_{n-1}=1$.
 \end{itemize}
\revTwo{The probability for the reference source to have transitioned given that a collision is observed in slot $n$ can be computed using the definition of conditional probability as
\begin{align*}
    &\mathsf P[X_n=1 \given Y_n=\collision, X_{n-1}=0, X^{(1)}_{n-1}, X^{(2)}_{n-1}] \\[.3em]
	&\hspace{1em}=     \frac{\mathsf P[X_n=1, Y_n=\collision \given X_{n-1}=0, X^{(1)}_{n-1}, X^{(2)}_{n-1} ]}
    {\mathsf P[Y_n = \collision \given X_{n-1}=0, X^{(1)}_{n-1}, X^{(2)}_{n-1}]}.
\end{align*}
In case (a), all three sources have the same probability of transition to a different state at time $n$, and hence of generating a transmission in slot $n$. Accordingly, the numerator evaluates to $\qzo(1-\qzz^2)$, whereas the denominator is given by $\qzo^3 + 3\qzz \qzo^2$. It follows that the probability of a transition for the reference node is $\approx 0.91$. Conversely, in case (b), the joint probability of the reference source transitioning and of seeing a collision is given by $\qzo(1-\qoo^2)$, whereas the overall collision event has probability $\qzo\qoz^2 + 2\qzo\qoz\qoo + \qzz\qoz^2$. The transition of the reference source can then be inferred in this case with probability  $\approx 0.99$.
}
\end{example}}

\revTwo{From the example above we can see that, under reactive sampling, having (even partial) knowledge of the state of the other sources, jointly with the channel output observations, provides information on the state of the reference source. From a careful inspection of the example we also see that what matters is not the state of each source, but rather the number of sources that are in a given state during the past slot.
Accordingly, we denote by $S_n$ the \ac{r.v.} that counts the number of sources (with the exclusion of the reference one) that are in state $0$ at time $n$. Obviously, $S_n \in \countset$ with $\countset = \{0, 1,\dots, M-1\}$. For the sake of estimating the state of the reference source, the system can be  characterized as a $2M$-state Markov chain $\sigma_n:=(X_n, S_n)$, with state space $\sourcealpha \times \countset$. 
The channel output depends then on the system state transition through the conditional probability function $P(y_n\given \sigma_{n-1}, \sigma_n)$, leading to the HMM illustrated in Fig.~\ref{fig:HMM_reactive_full}.}

\begin{figure}
    \centering
    \includegraphics[height=.95\columnwidth]{./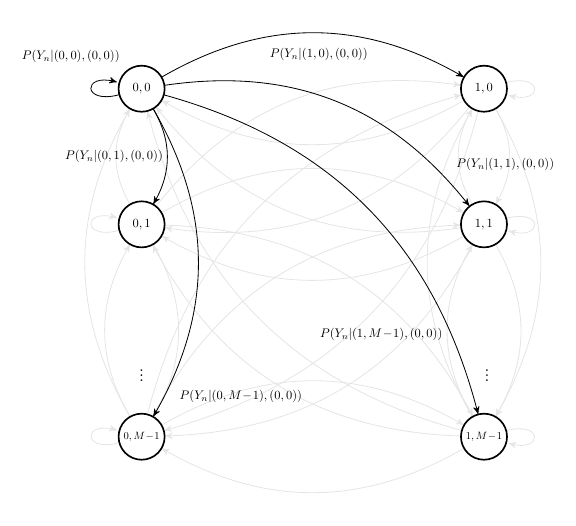}
    \vspace{-5mm}
    \caption{Hidden Markov model for the reactive transmission strategy. The underlying Markov chain has state $\sigma_n = (X_n,S_n)$, where $S_n$ denotes the number of nodes, other than the tracked source, that are in state $0$ at time $n$. The observed outputs depend on the state through the conditional probabilities $P(y_n\given \sigma_{n-1}, \sigma_n)$, some of which are highlighted in the diagram.}
    \label{fig:HMM_reactive_full}
\end{figure}

\revTwo{\begin{remark}\label{rem:symmetric_TP}
Note that, when $\tp{01}=\tp{10}$ (symmetric sources), the information on the counter $S_n$ can be dropped without any information loss, as the knowledge of $S_{n}$ does not influence the probability of observing a collision at step $n+1$. In this case, the conditional probability function $P(y_n\given x_{n-1},x_n)$ suffices, and the HMM simplifies to the one reported in Fig.~\ref{fig:hmm_reactive_symm}.
\end{remark}}

\begin{figure}
    \centering
    \includegraphics[width=.4\textwidth]{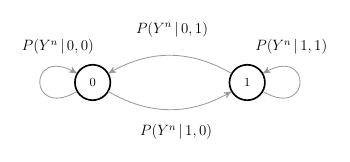}
    \caption{Hidden Markov model for the reactive transmission policy, when symmetric sources are observed ($\qzo=\qoz$).}
    \label{fig:hmm_reactive_symm}
\end{figure}

\begin{figure*}[!b]
	\setcounter{equation}{13}
	\hrulefill
	\begin{align}
		\begin{split}
			\tilde{\lambda}_n &= \ln \frac{\sum_{x_{n-1}\in\sourcealpha} \, \prob{X_n=0 , X_{n-1}=x_{n-1}, Y^{n-1} = y^{n-1}, Y_n=y_n }}{\sum_{x_{n-1}\in\sourcealpha} \, \prob{X_n=1 , X_{n-1}=x_{n-1}, Y^{n-1} = y^{n-1}, Y_n=y_n }}\\[.4em]	
			&\equal{(a)} \ln \frac{\sum_{x_{n-1}\in\sourcealpha} \, \prob{Y_n=y_n \given X_{n-1}=x_{n-1},X_n=0} \prob{X_n=0 \given X_{n-1}=x_{n-1}} P(x_{n-1}| y^{n-1})}{\sum_{x_{n-1}\in\sourcealpha} \, \prob{Y_n=y_n \given X_{n-1}=x_{n-1},X_n=1}\prob{X_n=1 \given X_{n-1}=x_{n-1}} P(x_{n-1}| y^{n-1})}\\[.4em]
			&\equal{(b)} \ln \frac{\sum_{x_{n-1}\in\sourcealpha} \, \prob{Y_n=y_n \given X_{n-1}=x_{n-1},X_n=0} \tp{x_{n-1}0} \exp(-x_{n-1}\tilde{\lambda}_{n-1})}{ \sum_{x_{n-1}\in\sourcealpha} \, \prob{Y_n=y_n \given X_{n-1}=x_{n-1},X_n=1}\tp{x_{n-1}1} \exp(-x_{n-1}\tilde\lambda_{n-1})}\\
			&=: g(y_n,\tilde{\lambda}_{n-1}). 
		\end{split}
		\label{eq:APPLRMy}
	\end{align}
\end{figure*}
\setcounter{equation}{11}

\revTwo{\subsubsection{Recursive APP Logarithmic Ratio Calculation}
In the reactive case, the derivation of the \ac{APP} logarithmic ratio is based on the recursive computation of the state probabilities for the $2M$-states \ac{HMM} described in Sec.~\ref{sec:HMM_reactive}. Following the steps of the forward-backward algorithm recursions, \revTwo{and recalling that $\sigma_n=(X_n, S_n)$} we have:
\begin{align}
	\begin{split}
	&P(\sigma_n,y^n) \equal{(a)} \sum_{\sigma_{n-1} \in \sourcealpha\times\countset} P(\sigma_n,\sigma_{n-1},y_n,y^{n-1})\\
	&\equal{(b)} \sum_{\sigma_{n-1} \in \sourcealpha\times\countset} P(\sigma_n,y_n|\sigma_{n-1},y^{n-1})P(\sigma_{n-1},y^{n-1})\\
	&\equal{(c)} \sum_{\sigma_{n-1} \in \sourcealpha\times\countset} P(y_n|\sigma_{n-1},\sigma_n)P(\sigma_n|\sigma_{n-1})P(\sigma_{n-1},y^{n-1}).
	\end{split}
	\label{eq:P_react}
\end{align}
In \eqref{eq:P_react}, (a) follows from the law of total probability, (b) from \revOne{conditioning on the past state and on $Y^{n-1}$}, and (c) by exploiting the Markov property.
At every step, the \ac{APP} logarithmic ratio can be evaluated as
\begin{align}
	\lambda_n &= \ln \frac{\displaystyle \sum_{\sigma_n \in \{0\}\times \countset}P(\sigma_n,y^n)}{\displaystyle \sum_{\sigma_n \in \{1\}\times \countset} P(\sigma_n,y^n)}.
\end{align}}

\revTwo{We remark that the complexity of the calculation entailed by the recursion grows quadratically with the number of states in the \ac{HMM}, i.e., with $M^2$. We describe in the following a sub-optimal algorithm that computes approximate values of the \ac{APP} logarithmic ratios with a complexity that is independent of the number of sources. This makes it suited in cases in which $M$ is large.} 

\revTwo{\subsubsection{Myopic State Estimation}\label{sec:myopicest}
The recursive calculation implementing the optimal detector requires tracing probabilities over a trellis diagram with $2M$ states. A simplified approach consists in neglecting the effect of $S_{n-1}$ and $S_n$ on the probability of observing $Y_n$. This is equivalent to the derivation of a recursive state estimator where (a) the reference source adopts a reactive transmission approach and (b) all the remaining $M-1$ sources adopt a random approach with activation probability set to the stationary probability of a state change $\tilde{\alpha} := \pi_0 \tp{01} + \pi_1 \tp{10}$.
We refer to the estimator obtained under this approximation as \emph{myopic} estimator, and to the model described by conditions (a) and (b) as \emph{surrogate myopic model}.} 

\revTwo{For such model we have
$P(y_n | \sigma_{n}, \sigma_{n-1}) = P(y_n | x_n, x_{n-1})$,
hence 
\begin{align}
P(x_n,y^n)&\!=\!\!\!\! \sum_{x_{n-1}\in\sourcealpha}  \!\!\!\! P(y_n | x_n, x_{n-1}) P(x_n|x_{n-1})P(x_{n-1},y^{n-1})
\label{eq:jointProb_HMM_reactive_myopic}
\end{align}
resulting in the myopic \ac{APP} logarithmic ratio recursion \eqref{eq:APPLRMy} reported at the bottom at the page. \setcounter{equation}{14}

Here, (a) follows by a recursive application of \revOne{conditioning} and of the Markov property, whereas (b) exploits again the fact $P(x_{n-1}|y^{n-1})=P(x_{n-1}|\tilde\lambda_{n-1})$, which holds true in the surrogate myopic model, and $ P(x_{n-1}|\tilde\lambda_{n-1}) \propto \exp(-x_{n-1}\tilde\lambda_{n-1})$.
 We will see numerically that the recursion $\tilde{\lambda}_n = g(y_n,\tilde{\lambda}_{n-1})$ yields estimates of the actual \ac{APP} logarithmic ratio that are accurate enough to characterize the estimation entropy under the reactive transmission strategy with good approximation. Notably, the evaluation of the recursion $\tilde{\lambda}_n = g(y_n,\tilde{\lambda}_{n-1})$ entails a complexity that is independent on the number of sources.}

 \revTwo{\begin{remark}
     It is worth mentioning that, as a consequence of Remark \ref{rem:symmetric_TP},  the recursion \eqref{eq:APPLRMy} yields the exact \ac{APP} logarithmic ratio when the sources have symmetric transition probabilities, i.e., when $\tp{00}=\tp{11}$. 
 \end{remark}}

\section{Density Evolution Analysis} \label{sec:de}

Recalling that the distribution of the estimation entropy $H_n$ can be derived from that of the \ac{APP} logarithmic ratio using \eqref{eq:ent_yn}, we consider next the problem of obtaining the distribution of the \ac{r.v.} $\Lambda_n$. To do so, we employ a \ac{DE} \cite{RU01a} approach to the recursive calculation of the \ac{APP} logarithmic ratio density over the trellis diagram describing the evolution of the \ac{HMM} state \cite[Chapter 4]{Pfister2003}. \revOne{In particular, quantized \ac{DE} \cite{JinQDE} provides an efficient (i.e., fast) means to evaluate the estimation entropy $H_n$, and it represents a viable alternative to the use of long Monte Carlo simulations.} We instantiate the analysis for both the random and the reactive transmission strategies.  Note that in the reactive case, the analysis requires tracking the evolution of the joint distribution of $2M-1$ random variables, rendering the quantized \ac{DE} analysis intractable even under simple quantiziation rules. For this reason, we will resort only to the myopic state estimator.

\subsection{Random Transmission Strategy}
The analysis is based on a recursive calculation of the distribution of $\Lambda_n$ given the distributions of $\Lambda_{n-1}$ and of $Y_n|X_n$. Suppose the joint distribution of $\Lambda_{n-1}$ and $X_{n-1}$ to be known. We have that  
\begin{align}
&P(\lambda_n,x_n) = \sum_{x_{n-1}\in\mathcal{X}} \!\!\!\sum_{\substack{y_n,\lambda_{n-1}:\\ f(y_n,\lambda_{n-1})=\lambda_n}} \!\!\!\!\!\!\!\! P(\lambda_{n-1},y_n,x_n,x_{n-1})\\
&=\!\!\!\! \sum_{x_{n-1}\in\mathcal{X}} \!\!\!\sum_{\substack{y_n,\lambda_{n-1}:\\ f(y_n,\lambda_{n-1})=\lambda_n}} \!\!\!\!\!\!\!\!\!\!\!\! P(y_n|x_n,x_{n-1})P(x_n|x_{n-1})P(\lambda_{n-1},x_{n-1})\\
&=\!\!\!\!\!\!\!\! \sum_{\substack{y_n,\lambda_{n-1}:\\ f(y_n,\lambda_{n-1})=\lambda_n}} \!\!\!\!\!\! P(y_n|x_n) \!\!\sum_{x_{n-1}\in\mathcal{X}}  P(x_n|x_{n-1})P(\lambda_{n-1},x_{n-1}) \label{eq:DErand}
\end{align}
where $f(y_n,\lambda_{n-1})$ is given in  \eqref{eq:lambda_rnd},
which readily provides the evolution of the joint distribution. The recursion is initialized by assuming no initial knowledge on the state, i.e., by setting
$\prob{\Lambda_{-1}=0,X_{-1}=0}=\prob{\Lambda_{-1}=0,X_{-1}=1}=1/2$.

\subsection{Reactive Transmission Strategy}
As discussed, we work under the surrogate myopic model introduced in Sec.~\ref{sec:myopicest}. Also in this case the analysis is based on a recursive calculation of the distribution of $\tilde\Lambda_n$ given the distributions of $\tilde\Lambda_{n-1}$ and of $Y_n|X_n,X_{n-1}$. Suppose the joint distribution of $\tilde\Lambda_{n-1}$ and $X_{n-1}$ to be known. We have that  
\begin{align}
&P(\tilde\lambda_n,x_n) = \sum_{x_{n-1}\in\mathcal{X}} \!\!\!\!\sum_{\substack{y_n,\tilde\lambda_{n-1}:\\ g(y_n,\tilde\lambda_{n-1})=\tilde\lambda_n}} \!\!\!\!\!\!\! P(\tilde\lambda_{n-1},y_n,x_n,x_{n-1})\\
 &= \!\!\!\!\sum_{x_{n-1}\in\mathcal{X}} \!\!\!\!\!\!\sum_{\substack{y_n,\tilde\lambda_{n-1}:\\ g(y_n,\tilde\lambda_{n-1})= \tilde\lambda_n}} \!\!\!\!\!\!\!\!\!P(y_n|x_n,x_{n-1})P(x_n|x_{n-1})P(\tilde\lambda_{n-1},x_{n-1}) \label{eq:DEreact}
\end{align}
where $g(y_n,\tilde{\lambda}_{n-1})$ is given in  \eqref{eq:APPLRMy},
which  provides the evolution of the joint distribution. The recursion is initialized by setting
$\mathsf{P}[\tilde\Lambda_{-1}=0,X_{-1}=0]=\mathsf{P}[\tilde\Lambda_{-1}=0,X_{-1}=1]=1/2$.

\section{Results and Discussion} \label{sec:results}

We analyze the \ac{SEE} as a function of the nodes' population size under both random and reactive transmission strategies. The results are obtained via \ac{DE} analysis as reported in Sec.~\ref{sec:de}, and verified by means of Monte Carlo simulations. In the latter case, the entropy $\ent(y^n)$ is tracked relying on a MAP estimator for each realization, and the SEE is estimated by averaging the results obtained for large values of $n$.

\begin{figure}[t]
    \centering
    \includegraphics[width=.95\columnwidth]{./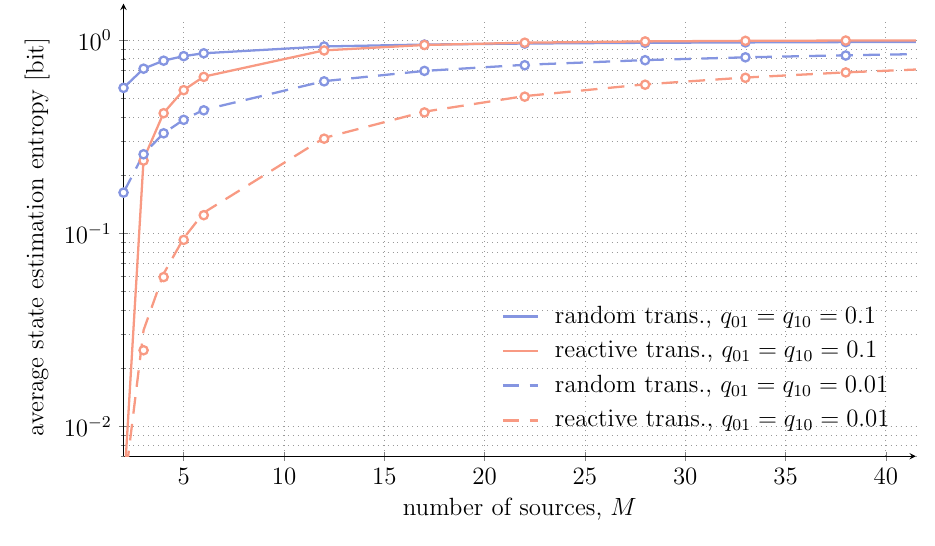}\vspace{-3mm}
    \caption{Average \ac{SEE} vs. number of nodes $M$, in the case of symmetric sources ($\qzo = \qoz$). Lines denote results obtained via \ac{DE} analysis, whereas markers the output of Monte Carlo simulations.}
    \label{fig:seeSymmetric}
\end{figure}

First insights on the behavior of the different access policies are provided in Fig.~\ref{fig:seeSymmetric}, which reports the average \ac{SEE} against the number of sources $M$ in the system for the symmetric case $\qzo=\qoz$. In the plot, blue lines refer to the random transmission approach, whereas red ones are relative to the reactive strategy. Solid and dashed patterns are used to distinguish results obtained for $\qzo=\qoz = 0.1$ and $\qzo = \qoz = 0.01$, respectively.

For the random transmission strategy, the activation probability $\alpha$ has been set to $1/M$. This choice maximizes the throughput of slotted ALOHA, and, as discussed in Sec.~\ref{sec:aoi_metric}, is also optimal in terms of average \ac{AoI}. It is easy to observe that setting $\alpha=1/M$ minimizes the \ac{SEE}, too.\footnote{\revOne{A similar trend \--- i.e., AoI as proxy to minimize the uncertainty on the source state \--- was also noted in \cite{sun_poly_TIT_2020,Yates20_Survey}, for a single source setting when sampling is performed independently of the source evolution.}} To see this, it is sufficient to note that the problem of estimating the state $X_n$ given the observations $Y^n$ is equivalent to the problem of estimating $X_n$ with $Y_0,Y_1,\ldots,Y_n$ being the observations of $X_0, X_1, \ldots, X_n$ after transmission over $n+1$ independent \acp{BEC} with erasure probability $\epsilon = 1-\omega$, where the channel output alphabet is $\{0,1,?\}$ and $?$ denotes an erasure. The observation follows by the fact that, under random transmissions, observing $Y_n\in\{\idle,\collision,\izero,\ione\}$ does not yield any information on $X_n$, hence any channel output in $\{\idle,\collision,\izero,\ione\}$ can be regarded as an erasure.  The conditional entropy $\ENT(X_n|Y^n)$ is hence minimized by minimizing $\epsilon$, i.e., by maximizing $\omega=\alpha(1-\alpha)^{M-1}$.

Fig.~\ref{fig:seeSymmetric} offers several take-aways. First, as expected, the average \ac{SEE} raises in all cases when more nodes populate the network. The trend stems from the harsher channel contention experienced for larger values of $M$, which increases the probability of losing updates due to collisions and thus the uncertainty at the receiver. 
In contrast, lower values of the state transition probability $\qzo=\qoz$ improve the \ac{SEE}. The reason is twofold. On the one hand, when the source status changes less often, fewer updates are required on average at the receiver to track its evolution, and the loss of packets entails a lower increase in the uncertainty level. On the other hand, more sporadic transitions reduce the channel contention in the reactive case, increasing the probability of successfully delivering a packet and positively impacting the metric. The plot also pinpoints the beneficial effect of implementing a reactive transmission strategy. As discussed, having sources only notifying state changes makes transmissions more informative and helps preventing congestion, whereas a random transmission approach may see nodes occupy the channel to send information already available at the receiver or, similarly, not promptly notify a relevant transition. The effect is especially apparent for low values of $M$. Notably, for $M=2$, perfect knowledge is available at the receiver for the reactive policy (\ac{SEE} equal to $0$). Indeed, in this case, once the state of both nodes is known, if a collision occurs the receiver deduces a simultaneous state change at the sources, which allows to infer the new states.

\begin{figure}[t]
    \centering
    \includegraphics[width=.95\columnwidth]{./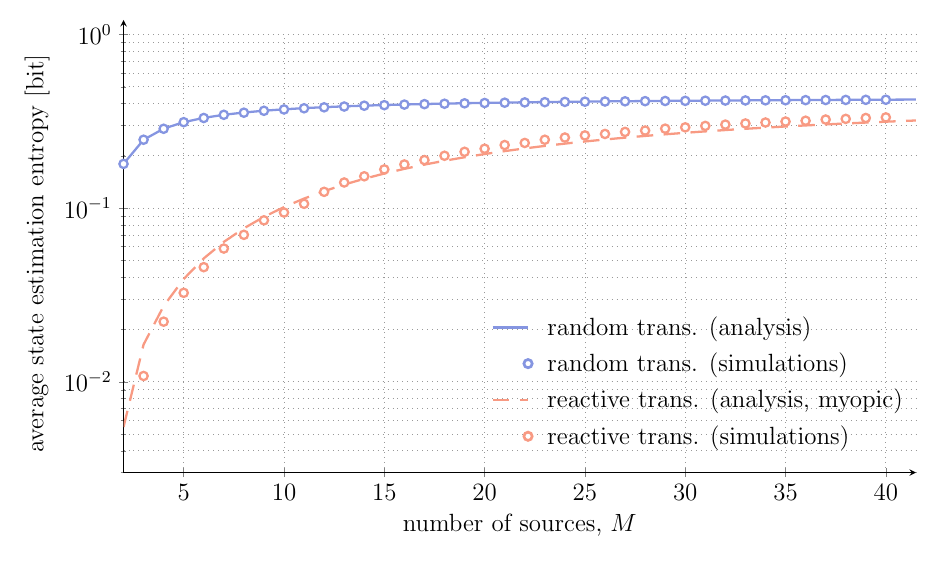}\vspace{-3mm}
    \caption{Average \ac{SEE} vs. number of nodes $M$. Asymmetric source case: $\qzo = 0.01$, $\qoz=0.1$. Lines denote results obtained via \ac{DE} analysis, whereas markers the output of Monte Carlo simulations. For the reactive case, the myopic surrogate model was used for \ac{DE}.}
    \label{fig:seeAsymmetric}
\end{figure}

The behavior in presence of asymmetric sources is reported in Fig.~\ref{fig:seeAsymmetric}, considering \mbox{$\qzo=0.01$} and $\qoz=0.1$. In this case, we recall that the analytical results obtained via \ac{DE} for the reactive transmission strategy offer an approximation, as they were derived resorting to the myopic surrogate model described in Sec.~\ref{sec:myopicest}. Nonetheless, a very tight match can be observed with the results of Monte Carlo simulations, which estimate the average \ac{SEE} taking into account the  evolution of all sources in the systems. The outcome is particularly interesting, as it corroborates the accuracy of the proposed simplified analytical approach in capturing the behavior of the policy also for asymmetric transitions.

For the rest, the plot confirms the trends discussed in the symmetric case. From this standpoint, it is interesting to observe that the average \ac{SEE} tends to converge for large values of $M$ to the entropy of the stationary state distributions of the respective Markov chains, which is $1$ for the symmetric case and $\approx0.44$ for the asymmetric one under study. In fact, as more nodes populate the system, the number of received packets per source progressively sinks (either because of collisions, in the case of the reactive scheme, or because the activation probability falls to $0$ asymptotically in $M$ for the random scheme), reducing the amount of information available at the receiver on the tracked processes.

\begin{figure}[t]
    \centering
    \includegraphics[width=.95\columnwidth]{./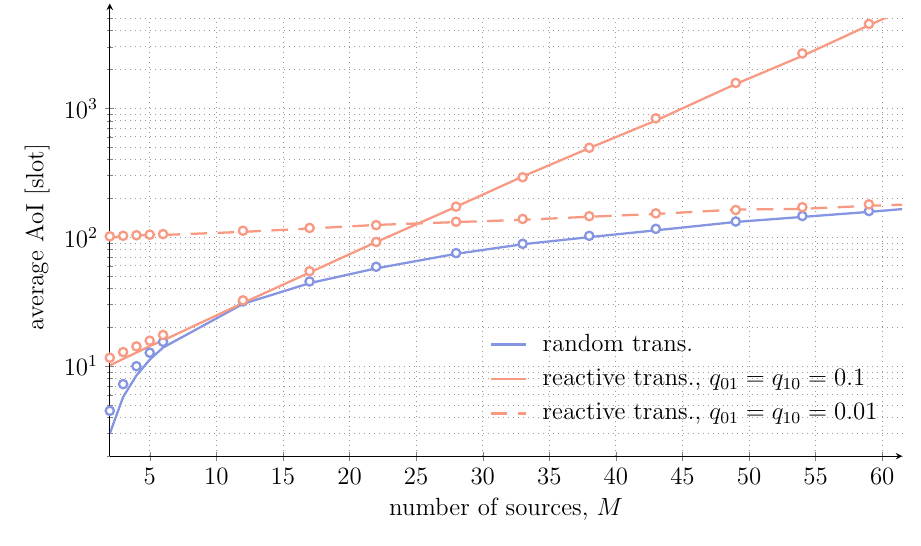}\vspace{-3mm}
    \caption{Average AoI vs number of nodes $M$, in the case of symmetric sources. Lines denote results obtained analytically, whereas markers the output of Monte Carlo simulations. For the random strategy, $\alpha=1/M$.}
    \label{fig:aoiSymmetric}
\end{figure}

The results reported so far have highlighted the beneficial role played by a transmissions strategy that is tuned to the process being monitored, when aiming at maintaining a low SEE at the receiver in the presence of sources that are not memoryless. Such outcome is particularly interesting, as it suggests medium access control design principles that inherently differ from those commonly considered when targeting information freshness. To appreciate this, we explore  the behavior of both the random and reactive transmission strategies in terms of average AoI. As discussed in Sec.~\ref{sec:aoi_metric}, the metric is commonly employed to capture how up to date the perception of a monitored process is at the receiver, and tracks the time elapsed since the generation of the last received update. In slotted ALOHA systems, AoI takes the form reported in \eqref{eq:aoi_SA}, and is minimized by maximizing the average throughput, i.e., the frequency with which each source can successfully report data. From this standpoint, we recall that the metric is by definition oblivious of the value being delivered, so that generating and delivering a reading leads to a reset of AoI regardless of the actual content of the message.

The average AoI obtained for the access strategies under study is reported in Fig~\ref{fig:aoiSymmetric} against the number of nodes in the network, considering symmetric sources ($\qoz=\qzo=0.1$, and $\qoz=\qzo=0.01$). For the random transmission, $\alpha=1/M$, as already discussed. We further note that only one curve is reported for such approach, as its performance in terms of AoI does not depend on the transitions of the underlying monitored source. This is not the case for reactive transmissions, as the activation probability inherently depends on how frequently the sources transition. \revTwo{Focusing on the two curves for the reactive approach, we also note that, when few nodes are present, a higher transmission frequency ($\qzo=0.1$) leads to lower AoI, as collisions are seldom experienced and updates can be delivered more often. As soon as $M>10$, however, the channel becomes overloaded (average load larger than $1$ pkt/slot, and the AoI rapidly deteriorates. In this region, the lower transmission rate experienced due to $\qzo = 0.01$ becomes beneficial, avoiding excessive congestion.}

More interestingly, the plot reveals that the average AoI attained with a random transmission strategy is always lower compared to the one offered by the reactive scheme. The only point in which the two strategies coincide corresponds to the situation in which the state transition probability coincides with the optimal activation probability (i.e., $\tilde\alpha = 1/\qzo = 1/M$, obtained for $M=10$ when $\qzo=0.1$ and for $M=100$ when $\qzo = 0.01$ \--- out of the plot).

\am{Remarkably, AoI and SEE suggest the use of different access solutions.  The intuition behind this behavior is that AoI treats all packets as equally informative, and taking the stochastic model of the source into account does not provide any advantage. However, when the actual state of a monitored process plays a major role, maintaining a low uncertainty level at the receiver may be critical. In this sense, the SEE naturally emerges as a good candidate metric, and the profoundly different hints it provides in terms of access strategy shall be taken into account in the design of the system. In turn, AoI may be the metric of choice when fresh information is needed, or can be a valuable proxy when the source statistics are not known at the receiver.}

\begin{figure}[t]
    \centering
    \includegraphics[width=.95\columnwidth]{./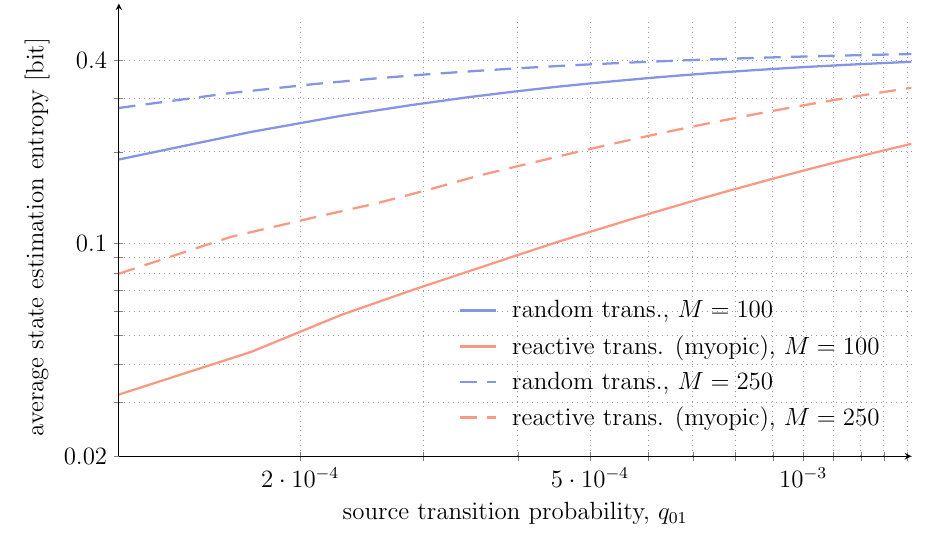}\vspace{-3mm}
    \caption{\revTwo{Average SEE vs. source transition probability $\qzo$, obtained by means of density evolution (myopic approximation for the reactive case). Solid lines report results for $M=100$, whereas dashed ones for $M=250$. In all cases, $\qoz = 10 \qzo$. Accordingly, for the reactive strategy the $x$-axis denotes a span of channel load from $0.02$ to $0.28$ [pkt/slot] when $M=100$ and from $0.05$ to $0.7$ [pkt/slot] when $M=250$.}}
    \label{fig:see_more_nodes}
\end{figure}

\revTwo{To conclude our study, we report in Fig.~\ref{fig:see_more_nodes} the \ac{SEE} attained with the different transmission strategies for larger user populations, representative of practical IoT settings. To this aim, we set $M=100$ (solid lines) or $M=250$ (dashed lines), and consider asymmetric sources with $\beta:= \qoz/\qzo = 10$. Accordingly, the stationary probabilities evaluate to \mbox{$\pi_0=\beta/(1+\beta) \approx 0.909$} and \mbox{$\pi_1 = 1/(1+\beta) \approx 0.091$}. In the plot, the $x$-axis explores different values of $\qzo$, denoting an increasing transition rate for the sources. For the random strategy, we set in all cases $\ptx=1/M$, so that the system operates at a channel load of $1$ packet per slot. For the reactive strategy, instead, we observe that the average load can be estimated (under the myopic approximation) as $M(\pi_0 q_{01} + \pi_1 q_{10})  = 2M\beta\qzo/(1+\beta)$, and the $x$-axis of Fig.~\ref{fig:see_more_nodes} can also be interpreted as a scaled version of the contention level (see figure caption). In the considered setting, the stationary entropy of the source is $\ENT(X) = 0.4395$, as was in Fig.~\ref{fig:seeAsymmetric}. As expected, an increase in the number of nodes leads to higher values of SEE for a given $\qzo$. For the reactive strategy, this is a direct consequence of the increased level of contention, and thus the lower success probability. As far as the random strategy is concerned, although the load is kept constant, and so is the success probability ($e^{-1} \approx 0.37$), the higher SEE stems from the more sporadic access opportunities each node has ($\ptx=1/M$). We observe how, also for larger values of $M$, the beneficial effect of tying the access policy to the source evolution is apparent when it comes to reducing the uncertainty at the receiver, as can be appreciated from the remarkable gap between the random and reactive strategies.
}
\section{Conclusions} \label{sec:conclusions}

In this paper, we have studied a system in which multiple terminals share a common slotted ALOHA channel to report updates to a receiver.  Assuming each node to monitor a two-state Markov source, we characterized the performance of the system in terms of average state estimation entropy, capturing the uncertainty at the receiver about the state of the tracked processes, under two transmission strategies: random and reactive. In the former case, a node randomly sends a status update at each slot, whereas in the latter a message is transmitted only if the monitored source has experienced a state change. We provided an analytical characterization of the SEE, and showed that its calculation is amenable to efficient implementation through DE, which allows to evaluate how the system performance scales with the number of source nodes.
Our study reveals that a reactive solution can offer better performance in terms of SEE, lowering channel congestion and favoring delivery of relevant updates. Notably, this design hint differs from what suggested when considering the average AoI as reference metric, for which the random transmission approach is convenient.  From this standpoint, AoI is an adequate metric in contexts where limited knowledge about the source statistics is available, and  it is reasonable to assume that drought of updates translate into higher uncertainty about the state of the source. However, in setups where the receiver has knowledge of the source model, SEE naturally emerges as a good metric, as it captures the residual uncertainty at the receiver about the state of the sources once the channel output and the source model have been taken into account.

\begin{appendices}
\section{Pragmatic State Estimation: Decode\&Hold} \label{sec:dh}

The approach presented in Sec.~\ref{sec:optimumEstimation} provides the receiver with an optimal estimate of the current source state, minimizing the probability of error. The complexity entailed by running a \ac{MAP} estimator may however be critical in settings where messages are delivered to a battery-powered and computationally-limited collector \cite{LoRa,Wang17_CommMag}. In addition, it requires knowledge on the statistics of the source, which may not be available. For such scenarios, other detectors may be preferred, trading off an optimal estimate in favor of a simpler implementation. 

Starting from this remark, we consider an alternative solution, based on a \emph{decode and hold} (D\&H) estimator. In this approach, the receiver maintains at any time $n$ an estimate $\est_n$ for the reference source, which is updated whenever a packet is decoded and reveals the current state of the node of interest, or remains unchanged otherwise, i.e.
\begin{align}
	\est_n = 
	\begin{cases}
		Y_n 		& \text{if } Y_n \in\{0,1\}\\
		\est_{n-1}	& \text{if } Y_n \in\{\idle,\collision,\izero,\ione\}\,.
	\end{cases}
\label{eq:DH_def}
\end{align}
From \eqref{eq:DH_def} we see that the D\&H solution does not require the calculations of an APP logarithmic ratio as in the MAP estimator, nor does it lean on knowledge of the source statistics. On the other hand, the following example provides an intuition of why this simple scheme is inherently suboptimal.
\begin{example}
	Consider the case of a system with $M=2$ sources, operating under the reactive transmission strategy, and assume the following evolution:
	\begin{itemize}
		\item at time $n-2$, we have $X_{n-2} = 0$, $X^{(1)}_{n-2} = 1$
		\item at time $n-1$ only the source of interest transitions: $X_{n-1} = 1$, $X^{(1)}_{n-1} = 1$
		\item at time $n$ both sources transition: $X_{n} = 0$, $X^{(1)}_{n} = 0$
	\end{itemize} 
	Accordingly, slot $n-1$ sees the sole transmission of the reference source, so that $Y_{n-1}=1$, whereas a collision is experienced over slot $n$, i.e. $Y_n = \collision$. In this situation, the D\&H estimator outputs the sequence $\est_{n-1} = 1$, $\est_{n}=1$, providing an erroneous estimate in slot $n$ (i.e., $\est_n \neq X_n$) Conversely, the output of the threshold test on the APP logarithmic ratio in \eqref{eq:lambdan_def} performed by the MAP approach returns the correct estimates in both time instants. This can readily be verified by observing that, for the case under study, \mbox{$\mathsf P[X_{n}=0 \,|\, Y_n=\collision,Y_{n-1}=1,Y^{n-2}=y^{n-2}\,] = 1$}. Indeed, the observation of a collision implies that both nodes transmitted, and hence changed their state, providing certain knowledge of the state of the reference source at time $n$ as well. 
\end{example}

To characterize the performance of this low-complexity solution, we focus on the average error probability $P_e$, which can be effectively computed by jointly tracking the source state and estimate processes via the Markov chain $(X_n,\est_n)$. The analysis we present is exact under random transmissions, whereas it resorts to the myopic surrogate model when the reactive strategy is implemented.

\begin{figure*}
	\centering
	\subfloat[random transmission strategy]{
	\includegraphics[width=.42\textwidth]{./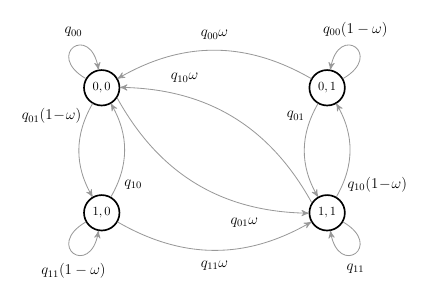}
	\label{fig:mc_est_random}
	}
	\hspace{-1.5em}
	\subfloat[reactive transmission strategy]{
		\includegraphics[height=.29\textwidth]{./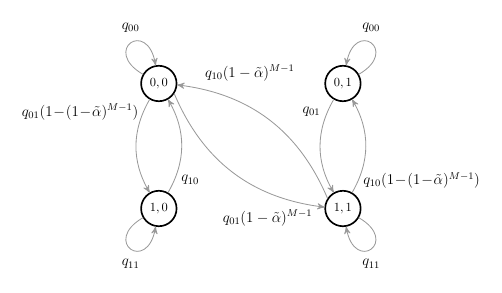}
		\label{fig:mc_est_reactive}
	}
	\caption{Markov chains $(X_n,\est_n)$ tracking the evolution of reference source state and D\&H estimate, in the case of a random (a) and reactive (b) transmission policy.}
\end{figure*}

Let us first focus on the random approach. In this case, the transition probabilities of the chain are summarized in Fig.~\ref{fig:mc_est_random}, where we recall that $\pup = \alpha (1-\alpha)^{M-1}$ is the probability that a node  successfully delivers an update over a slot. As an example, consider state \mbox{$(X_n,\est_n)=(0,0)$}. The process remains in the same state over the next slot if no change of state occurs, i.e. with probability $\qzz$. Note indeed that $\est_{n+1}$ will remain $0$ both in case of transmission and successful delivery of an update (i.e. $Y_{n+1}=0$), and in the absence of a received packet from the reference source (i.e. $Y_{n+1}\in\{\idle,\collision,\izero,\ione\}$). Instead, no transition to state $(0,1)$ can take place, since the D\&H estimator would only reset $\est_{n+1}$ to $1$ upon receiving an update from the source containing that value, which is not possible when $X_{n+1}=0$. In turn, the system moves to $(1,0)$ \--- providing an erroneous estimate of the source state \--- whenever the node of interest changes state (probability $\qzo$) but does not deliver an update, either due to a collision or for lack of transmission (overall probability $1-\pup$). Conversely, a transition to $(1,1)$ occurs when the source moves to state $1$ and successfully sends a packet in slot $n+1$ (probability $\qzo \pup$). All other probabilities in the chain can be derived following a similar reasoning.

The finite state Markov process is readily shown to be aperiodic and irreducible, and thus ergodic. Accordingly, the error probability $P_e$ introduced in \eqref{eq:Pe_def}, expressing the average time spent by the chain in $(0,1)$ and $(1,0)$, can be computed as the sum of the stationary probabilities of such states, denoted by $\pi_{(0,1)}$ and $\pi_{(1,0)}$. Solving the balance equation, we get 
\begin{align}
	P_e = \pi_{(0,1)} + \pi_{(1,0)} = \frac{2 \qzo \qoz \left(1-\pup \right) }{(\qzo + \qoz)\left[\,\pup + (1-\pup)(\qzo + \qoz)\right]}\,.
	\label{eq:Pe_DH_random}
\end{align}
The same approach can be leveraged to derive the performance of the D\&H estimator when the nodes operate following a reactive transmission policy, leaning on the surrogate myopic model introduced in Sec.~\ref{sec:myopicest}.  The corresponding transition probabilities for the Markov chain $(X_n, \est_n)$ take the form reported in Fig.~\ref{fig:mc_est_reactive}. In this case, the term $(1-\tilde\alpha)^{M-1}$ captures the probability for a source to deliver an update over a slot once a state change has taken place, with the activation probability $\tilde\alpha$ defined in \eqref{eq:myopic_def}. The stationary distribution of the chain gives in this case
\begin{align}
	\pi_{(0,1)} &\approx \frac{\qoz (1-(1-\tilde\alpha)^{M-1})}{(\qzo+\qoz)(2-(1-\tilde\alpha)^{M-1})}\\[.2em]
	\pi_{(1,0)} &\approx \frac{\qzo (1-(1-\tilde\alpha)^{M-1})}{(\qzo+\qoz)(2-(1-\tilde\alpha)^{M-1})}
\end{align}
leading to an average error probability
\begin{align}
	P_e \approx \dfrac{1-(1-\tilde\alpha)^{M-1}}{2-(1-\tilde\alpha)^{M-1}} \,.
	\label{eq:Pe_DH_reactive}
\end{align}
As discussed, this is exact for symmetric sources (i.e. $\qzo=\qoz$), whereas it is only an approximation in the asymmetric case. 

\begin{figure}
	\centering
	\includegraphics[width=.95\columnwidth]
    {./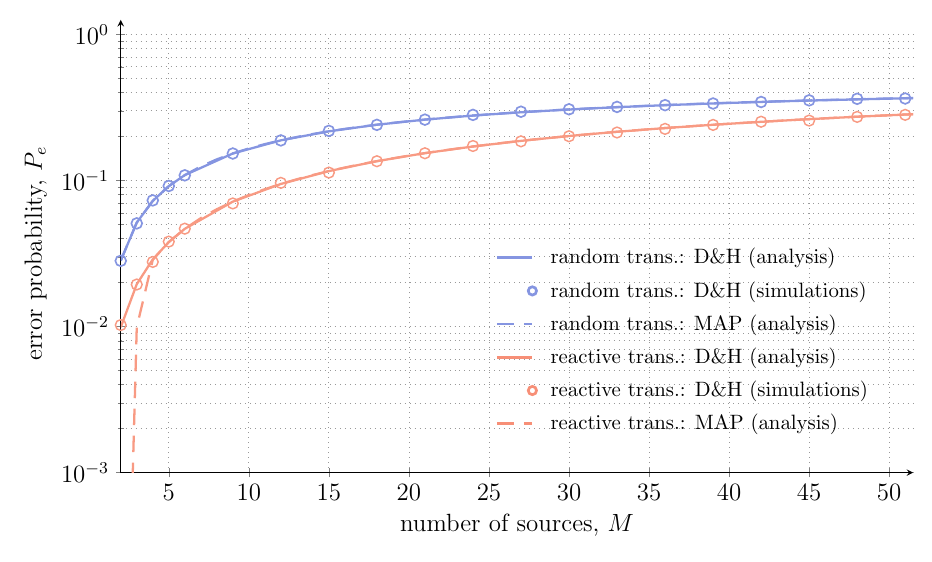}
    \vspace{-5mm}
	\caption{Average state estimate error probability, $P_e$, against the number of devices in the network, symmetric source case ($\qoz=\qzo=0.01$). Solid lines denote analytical results for the D\&H estimator, whereas circle markers the outcome of Monte Carlo simulations. Dashed lines report  results for a MAP estimator using \eqref{eq:PEMAPRAND} and \eqref{eq:PEMAPREACT}. Different colors indicate the performance attained under the random (blue) or reactive (red) transmission policy.} 
	\label{fig:Pe_symmetric}
\end{figure}

First insights on the behavior of the D\&H estimator are offered by Fig.~\ref{fig:Pe_symmetric}, which reports $P_e$ against the number of nodes in the network in the case of symmetric sources (\mbox{$\qzo=\qoz=0.01$}). Blue lines refer to performance attained under the random transmission policy, whereas red ones are representative of the reactive approach. In the former case, the activation probability has been set as the reciprocal of the number of nodes, i.e., $\alpha=1/M$, in order to maximize the throughput and thus the average number of delivered updates. In the plot, solid lines report the analytical results for the D\&H estimator obtained via \eqref{eq:Pe_DH_random} and \eqref{eq:Pe_DH_reactive}, whereas circle markers denote the results of Monte Carlo simulations. Finally, dashed lines  show the performance of a MAP estimator, which can be derived from the \ac{DE} analysis. In particular, for the random transmission strategy case the error probability is
\begin{equation}
 P_e = \lim_{n \to \infty} \sum_{\lambda_n \leq 0} P(\lambda_n,0) + \lim_{n \to \infty} \sum_{\lambda_n > 0} P(\lambda_n,1)   \label{eq:PEMAPRAND}
\end{equation}
where $P(\lambda_n,x_n)$  is computed with the recursion \eqref{eq:DErand}. Note that to practically estimate the limit in \eqref{eq:PEMAPRAND} it suffices to let $n$ grow large enough (e.g., $n\approx 10^5$) to observe converging probability estimates. Similarly, the error probability for the \ac{MAP} estimator for the reactive transmission strategy case can be obtained, in the myopic approximation setting, as
\begin{equation}
 P_e \approx \lim_{n \to \infty} \sum_{\tilde\lambda_n \leq 0} P(\tilde\lambda_n,0) + \lim_{n \to \infty} \sum_{\tilde\lambda_n > 0} P(\tilde\lambda_n,1)   \label{eq:PEMAPREACT}
\end{equation}
where $P(\tilde\lambda_n,x_n)$ follows from the \ac{DE} recursion \eqref{eq:DEreact}. 

The reported trends pinpoint a visible gap between the two estimators when few nodes populate the network. The rationale behind this goes along the lines of the discussion presented in Example 2. Indeed, while both MAP and D\&H attain an exact knowledge whenever a packet from the tracked source is received, the former refines its estimate also in the presence of an idle slot, a collision, or upon receiving a packet from another node ($Y_n\in\{\idle,\collision,\izero,\ione\}$). Such side information is especially beneficial for low values of $M$, as it allows to infer with a good level of confidence the state of the reference process. A simple quantitative intuition on this can be grasped by focusing on the reactive strategy and by considering the likelihood ratio 
\begin{align}
	&\dfrac{\mathsf P[ X_{n} = X_{n-1}, Y_n = \collision \,]}{\mathsf P[ X_{n} \neq X_{n-1}, Y_n = \collision \,]} \\
	&= \frac{(1-\alpha) \left[1-(1-\alpha)^{M-1} - (M-1)(1-\alpha)^{M-2}\right]}{\alpha\left[1-(1-\alpha)^{M-1}\right]} \hspace*{1em}
	\label{eq:pColl_reactive}
\end{align}
obtained in the event of a collision and only looking at the outcome of the last slot. The quantity evaluates to $0$ for $M=2$, allowing the MAP estimator to extract exact knowledge on the change of state, as discussed in the previously presented example. 

Interestingly, the performance gap between the two approaches vanishes as the number of sources increases. For larger $M$, indeed, idle slots occur more seldom, and the impact of observing a collision on the MAP estimate of the reference source becomes weaker. A hint on this is again offered by \eqref{eq:pColl_reactive}, as the likelihood ratio converges to $(1-\alpha)/\alpha$ for $M\rightarrow\infty$. This observation is of practical relevance, suggesting that the simple D\&H solution offers good performance in sufficiently large networks when symmetric sources are to be tracked. 

Fig.~\ref{fig:Pe_symmetric} also reveals that a lower error probability is attained for the configuration under study when nodes implement a reactive transmission approach, especially for low to intermediate values of $M$. The choice of accessing the channel only to signal a change of state is in this case particularly beneficial, increasing the probability of successfully notifying an update. Conversely, when sources are sampled at random times, nodes may attempt to report information which is already available at the receiver, congesting the medium unnecessarily and generating additional collisions that reduce the estimator accuracy.

\begin{figure}
	\centering
	\includegraphics[width=.95\columnwidth]{./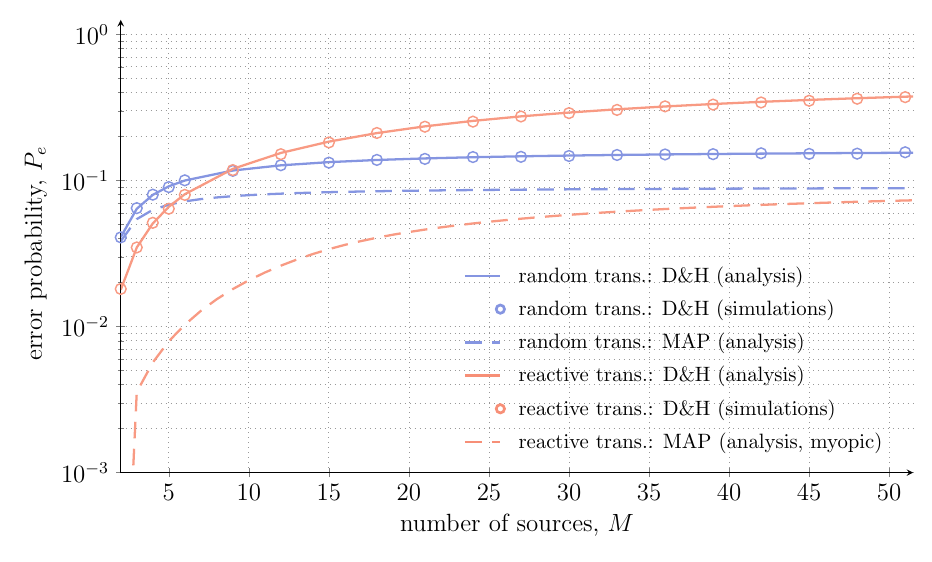}
 \vspace{-5mm}
	\caption{Average state estimate error probability, $P_e$, against the number of devices in the network, asymmetric source case ($\qoz=0.01, \qzo=0.1$). Solid lines denote analytical results for the D\&H estimator, whereas circle markers the outcome of Monte Carlo simulations. Dashed lines report results for a MAP estimator obtained via \eqref{eq:PEMAPRAND} and \eqref{eq:PEMAPREACT} (myopic surrogate). Different colors indicate the performance attained under the random (blue) or reactive (red) transmission policy.}
	\label{fig:Pe_asymmetric}
\end{figure}

These remarks are complemented by Fig.~\ref{fig:Pe_asymmetric}, which shows the same set of performance trends in the case of asymmetric sources, assuming $\qzo = 0.01, \qoz = 0.1$. Within the plot, let us first consider the behavior observed when the reactive policy is implemented (red lines). In this case, simulation results show a very tight match with the analytical formulation of $P_e$ for the D\&H estimator \eqref{eq:Pe_DH_reactive}, obtained relying on the myopic surrogate approximation. On the other hand, a significant gap is present with respect to the behavior of a MAP approach, for all values of $M$. 
This stems from the long time that may be required for an erroneous D\&H estimate to be corrected. For the configuration under study, for instance, a change of the Markov chain $(X_n,\est_n)$ in  Fig.~\ref{fig:mc_est_reactive} from state $(0,1)$ to an exact knowledge can occur at the earliest when the source takes the less likely transition $0\rightarrow 1$ and the transmitted update is correctly received (i.e., $(X_n,\est_n)$ transitions to $(1,1)$, taking on average $1/[\qzo(1-\tilde\alpha)^{M-1}]$ slots). Conversely, the MAP approach can better refine its estimate at each slot, possibly correcting the erroneous knowledge without the need for the source to perform any further transmission. 
In addition, and in contrast to what observed in the symmetric case, the D\&H estimator performs worse when nodes implement a reactive rather than a random transmission policy, already for relatively low values of $M$. The trend can again be explained observing that in the former case the estimator can remain in error for a long time due to lack of transitions (and thus update transmissions) of the reference source. When delivery attempts are performed at random times, instead, such periods of drought can be shortened, with beneficial effects on the average error probability.

\section{Conditional Statistics of Channel Observations} \label{sec:appendix}

In this appendix, we derive the statistical relation between the observed channel output at a generic slot, $Y_n$, and the state of the system sources, as introduced in the \acp{HMM} of Sec.~\ref{sec:optimumEstimation}.

\subsection{Random Transmission Strategy}

In this case, we are interested in computing the conditional distribution $P(y_n|x_n)$ in \eqref{eq:condProb_HMM_random}. We start by observing that, by the i.i.d. behavior of all nodes across slots, the channel output does not depend on the state of the reference source when $Y_n\in\{\idle,\collision,\izero,\ione\}$, i.e. in the case of idle slot, collision, or message reception from another node. Accordingly, we get
\begin{align}
\label{eq:condProb_rand}
    \,\,\mathsf P[Y_n = \idle \given X_n=x_n] &=  (1\!-\!\alpha)^{M} \\
    \,\,\mathsf P[Y_n = \collision \given X_n=x_n] &= 1 - (1\!-\!\alpha)^{M} - M\pup \\
    \,\,\mathsf P[Y_n = \izero \given X_n=x_n] &= (1-\alpha) (M-1) \pi_0\, \pup\\
    \,\,\mathsf P[Y_n = \ione \given X_n=x_n] &= (1-\alpha) (M-1) \pi_1\, \pup\,.
\end{align}

In particular, an idle slot is experienced when none of the $M$ sources becomes active, whereas a collision occurs when more than one node transmits. In turn, the probability of receiving a "one" or "zero" message from a source other than the reference one is obtained by jointly considering the event of having the reference source not transmitting ($1-\alpha$), a single packet sent over the slot by one of the other nodes (probability $(M-1)\pup$), and that the sender is in the corresponding state (probability $\pi_0$ or $\pi_1$). 

Finally, the conditional probabilities when a packet from the reference source is received can be obtained as \mbox{$\mathsf P[Y_n=0\given X_n=0] = \mathsf P[Y_n=1\given X_n=1] = \pup$},  \mbox{$\mathsf P[Y_n=0\given X_n=1] = 0$} and \mbox{$\mathsf P[Y_n=1\given X_n=0] = 0$}. In the first case, the outcome is observed when the source transmits while all other contenders remain silent, whereas receiving a message from the source with a state different from the current one is not possible.

\subsection{Reactive Transmission Strategy, Symmetric Sources}

When sources are symmetric ($\qzo=\qoz)$ and a reactive transmission strategy is employed, the corresponding HMM is fully characterized by specifying the conditional probabilities $P(y_n\given x_{n-1},x_n)$, as highlighted in \eqref{eq:jointProb_HMM_reactive_myopic}. Consider first the case $Y_n=\idle$. Such an outcome can only be observed if the reference source does not transition (i.e., $X_{n-1}= X_n)$ and the same holds for all other nodes. We thus get $\mathsf P[ Y_n=\idle \given X_{n-1} =  x_{n-1},X_n=x_n] = (1-\tilde\alpha)^{M-1}$ for $x_{n-1} = x_n$ and $0$ otherwise. The reference source has to remain in the same state also for the receiver to retrieve a packet from any of the other nodes ($Y_n \in \{\izero,\ione\})$. Accordingly, 
\begin{align}
&\mathsf P[Y_n = \izero \given X_{n-1}=  x_{n-1},X_n=x_{n-1}] \\
&\hspace*{2em}= \mathsf P[Y_n = \ione \given X_{n-1} =  x_{n-1},X_n=x_{n-1} ]\\ 
&\hspace*{2em}= (M-1) \,\frac{\tilde\alpha}{2} \,(1\!-\!\tilde\alpha)^{M-2}
    \label{eq:pOtherGivenSame}
\end{align}
whereas $\mathsf P[Y_n=\izero|X_{n-1}\neq X_n] =\mathsf P[Y_n=\ione|X_{n-1}\neq X_n] = 0$. Within \eqref{eq:pOtherGivenSame}, the term $\tilde\alpha/2 = \pi_0 \qzo = \pi_1 \qoz$ denotes the probability for one of the $M-1$ sources to perform the transition which is successfully reported to the receiver. 

Similarly, when considering a collision outcome, the cases in which a state change (i.e., transmission) for the reference source takes place or not have to be distinguished. In the former, the activation of one or more of the remaining $M-1$ nodes suffices to have $Y_n=\collision$, whereas two ore more have to change state if the reference node does not. Following this reasoning we obtain
\begin{align}
    &\mathsf P[ Y_n = \collision \given X_{n-1}\!=\!x_{n-1}, X_n\! =\! x_n] \\
    &=\begin{cases}
        1 \!-\! (1\!-\! \tilde\alpha)^{M-1}   &\, \text{if } x_{n-1} \neq x_n \\
        1 \!-\! (1\!-\! \tilde\alpha)^{M-1}
        \!-\! (M\!-\!1)\,\tilde\alpha \,(1\!-\! \tilde\alpha)^{M-2} &\, \text{otherwise.} \\
    \end{cases}
\end{align}

Lastly, the conditional probability of observing a reading from the source of interest can be derived with the same reasoning applied in the random transmission case:
\begin{align}
    &\mathsf P[ Y_n = 0 \given X_{n-1}\!=\!x_{n-1}, X_n\! =\! x_n] \\[.2em]
    &\hspace*{4em}=\begin{cases}
        (1\!-\! \tilde\alpha)^{M-1}   &\, \text{if } x_{n-1}=1, x_n=0  \\
        0 &\, \text{otherwise}
    \end{cases}
    \label{eq:condProb_reactive_symm_0}
\end{align}

\begin{align}
    &\mathsf P[ Y_n = 1 \given X_{n-1}\!=\!x_{n-1}, X_n\! =\! x_n] \\[.2em]
    &\hspace*{4em}=\begin{cases}
        (1\!-\! \tilde\alpha)^{M-1}   &\, \text{if } x_{n-1}=0, x_n=1 \\
        0 &\, \text{otherwise.}
    \end{cases}
    \label{eq:condProb_reactive_symm_1}
\end{align}

\subsection{Reactive Transmission Strategy, Asymmetric Sources}
For the general case of asymmetric sources ($\qoz\neq\qzo)$, we are interested in computing both the one-step transition probabilities of the Markov chain $\sigma_n=(X_n,S_n)$ and the conditional probabilities $P(y_n\given\sigma_{n-1},\sigma_n)$. Let us first consider the former. Recalling the independent behavior of the reference source, we readily get
\begin{align}
    P(\sigma_n\given \sigma_{n-1}) = P(x_n \given x_{n-1}) \, P(s_{n} \given s_{n-1}).
\end{align}
Denote now for the sake of compactness as $\bar S_{n}$ the r.v. describing the number of sources, apart from the reference one, in state $1$ at a generic slot $n$, i.e.
\begin{align}
 \bar S_{n} = M - 1 - S_n \,.
\end{align}
By simple combinatorial arguments, it follows that
\begin{align}
    &\mathsf P[\,S_n = \sno + k \given S_{n-1} = \sno \,] \\
    &\hspace*{.3em}= \!\!\!\!\!\sum_{\ell=0}^{\min\{\sno,\sco\-k\}} \!\!\!\binom{\sno}{\ell} \qzo^\ell  \, \qzz^{\sno-\ell} \, \binom{\sco}{\ell+k} \qoz^{\ell+k} \, \qoo^{\sco - \ell - k}
\end{align}
for any $0 \leq k \leq M-1-\sno$. The expression accounts for all the possible cases in which the number of sources transitioning from state $1$ to $0$ is $k$ more than those changing from $0$ to $1$. Similarly, when the number of sources in state $0$ experiences an overall decrease, we obtain for any $1 < k \leq \sno$
\begin{align}
    &\mathsf P[\,S_n \!=\! \sno \!-\! k  \given S_{n-1} = \sno\,] \\
    &= \!\!\!\!\!\!\sum_{\ell=1}^{\min\{\sno-k,\sco\}} \!\!\!\binom{\sno}{\ell} \qzo^\ell  \, \qzz^{\sno-\ell} \, \binom{\sco}{\ell-k} \qoz^{\ell-k} \, \qoo^{\sco - \ell + k}.
\end{align}

Leaning on this, the conditional probabilities of observing $Y_n$ can be derived. Consider first the case $Y_n=\idle$. Recalling that an idle slot under the reactive strategy occurs only when neither the reference source nor any of the other nodes transition, we get 
for $S_n=S_{n-1}$ and $X_n=X_{n-1}$
\begin{align}
    &\mathsf P \left[ \, Y_n = \idle \given \sigma_{n-1} = (x_{n-1}, \sno ),\sigma_n=\sigma_{n-1}\,\right] \\[-.3em]
    &\hspace*{2em}= \dfrac{\qzz^{\sno} \, \qoo^{\sco}}{\mathsf P\left[ S_n = \sno \given  S_{n-1} = \sno \right]}
    \label{eq:condProb_reactiveAsymm_idle}
\end{align}
and $\mathsf P \left[ \, Y_n = \idle \given \sigma_{n-1} = (x_{n-1}, \sno ),\sigma_n=(x_n,s_n)\,\right] = 0$ otherwise. In \eqref{eq:condProb_reactiveAsymm_idle}, only the cases in which none of the other sources change state (probability $\qzz^{\sno} \, \qoo^{\sco}$) are accounted for in triggering an idle slot, as the overall event $S_{n}=S_{n-1}$ also includes all cases in which the same number of nodes transitions from $0$ to $1$ and from $1$ to $0$. Following a similar reasoning, the conditional probabilities for the receiver to decode a packet from a source different from the reference one follow. Specifically, for $S_n=S_{n-1}-1$ and $X_n=X_{n-1}$
\begin{align}
    &\mathsf P \left[ \, Y_n = \ione \given \sigma_{n-1} = (x_{n-1}, \sno ),\sigma_n=(x_{n-1},s_{n-1}-1)\,\right] \\[-1em]
    &\hspace*{2em}=\dfrac{\sno \,\qzo \,\qzz^{\sno-1}\, \qoo^{\sco-1}}{\mathsf P\left[ S_n = \sno-1 \given  S_{n-1} = \sno \right]}
\intertext{and, for $S_n=S_{n-1}+1$, $X_n=X_{n-1}$}
    &\mathsf P \left[ \, Y_n = \izero \given \sigma_{n-1} = (x_{n-1}, \sno ),\sigma_n=(x_{n-1},s_{n-1}+1)\,\right] \\[-.5em]
    &\hspace*{2em}=\dfrac{\sco \,\qoz \,\qoo^{\scn-1}\, \qzz^{\sco}}{\mathsf P\left[ S_n = \sno+1 \given  S_{n-1} = \sno \right]}.
\end{align}
In all other cases, the events cannot be observed. The expressions capture the event that only one of the sources performs a transition and notifies its new state.

Finally, the conditional probabilities for $Y_n$ to take value $0$ or $1$ are akin to those obtained in \eqref{eq:condProb_reactive_symm_0}, \eqref{eq:condProb_reactive_symm_1}, as only the reference node has to transition over slot $n$. In this case, accounting for the asymmetry of the other sources, we have for $S_n=S_{n-1}$ and $X_n=1$, $X_{n-1}=0$
\begin{align}
    &\mathsf P \left[ \, Y_n = 1 \given \sigma_{n-1} = (0, \sno ),\sigma_n=(1,\sno)\,\right] \\
    &\hspace*{2em}=\dfrac{ \qoo^{\sno}\, \qzz^{\sco}}{\mathsf P\left[ S_n = \sno \given  S_{n-1} = \sno \right]}
\end{align}
and $\mathsf P \left[ \, Y_n = 1 \given \sigma_{n-1} = (x_{n-1}, \sno ),\sigma_n=(x_n,s_n)\,\right] = 0$ otherwise. Similarly
\begin{align}
    &\mathsf P \left[ \, Y_n = 0 \given \sigma_{n-1} = (1, \sno ),\sigma_n=(0,\sno)\,\right] \\
    &\hspace*{2em}=\dfrac{ \qoo^{\sno}\, \qzz^{\sco}}{\mathsf P\left[ S_n = \sno \given  S_{n-1} = \sno \right]}
\end{align}
and $0$ otherwise.

In conclusion, the observation of a collision is the complementary event to those just described, and the corresponding probability, i.e. $\mathsf P \left[ \, Y_n = \collision \given \sigma_{n-1} = (x_{n-1}, \sno ),\sigma_n=(x_n,s_n)\,\right]$, can be derived accordingly.

\end{appendices}

\bibliographystyle{IEEEtran}
\bibliography{IEEEabrv,Template}

\begin{thebibliography}{10}
\providecommand{\url}[1]{#1}
\csname url@samestyle\endcsname
\providecommand{\newblock}{\relax}
\providecommand{\bibinfo}[2]{#2}
\providecommand{\BIBentrySTDinterwordspacing}{\spaceskip=0pt\relax}
\providecommand{\BIBentryALTinterwordstretchfactor}{4}
\providecommand{\BIBentryALTinterwordspacing}{\spaceskip=\fontdimen2\font plus
\BIBentryALTinterwordstretchfactor\fontdimen3\font minus
  \fontdimen4\font\relax}
\providecommand{\BIBforeignlanguage}[2]{{%
\expandafter\ifx\csname l@#1\endcsname\relax
\typeout{** WARNING: IEEEtran.bst: No hyphenation pattern has been}%
\typeout{** loaded for the language `#1'. Using the pattern for}%
\typeout{** the default language instead.}%
\else
\language=\csname l@#1\endcsname
\fi
#2}}
\providecommand{\BIBdecl}{\relax}
\BIBdecl

\bibitem{Abramson77:PacketBroadcasting}
N.~Abramson, ``The throughput of packet broadcasting channels,'' \emph{{IEEE}
  Trans. Commun.}, vol. COM-25, no.~1, pp. 117--128, 1977.

\bibitem{LoRa}
{LoRa Alliance}, ``{The LoRa Alliance Wide Area Networks for Internet of
  Things},'' \url{www.lora-alliance.org}.

\bibitem{Kaul11_SECON}
S.~{Kaul}, M.~{Gruteser}, V.~{Rai}, and J.~{Kenney}, ``Minimizing age of
  information in vehicular networks,'' in \emph{Proc. IEEE SECON}, June 2011.

\bibitem{kaul_infoco2012}
S.~Kaul, R.~Yates, and M.~Gruteser, ``Real-time status: How often should one
  update?'' in \emph{Proc. IEEE INFOCOM}, Mar. 2012.

\bibitem{Yates20_Survey}
R.~Yates, Y.~Sun, R.~Brown, S.~Kaul, E.~Modiano, and S.~Ulukus, ``Age of
  information: An introduction and survey,'' \emph{{IEEE} J. Sel. Areas
  Commun.}, vol.~39, no.~5, pp. 1183--1210, 2021.

\bibitem{Ephremides16_TIT}
M.~Costa, M.~Codreanu, and A.~Ephremides, ``On the age of information in status
  update systems with packet management,'' \emph{{IEEE} Trans. Inf. Theory},
  vol.~62, no.~4, pp. 1897--1910, 2016.

\bibitem{Yates17_ISIT}
E.~Najm, R.~Yates, and E.~Soljanin, ``Status updates through {M/G/1/1} queues
  with {HARQ},'' in \emph{Proc. IEEE ISIT}, Jun. 2017.

\bibitem{Durisi19_JSAC}
R.~{Devassy}, G.~{Durisi}, G.~C. {Ferrante}, O.~{Simeone}, and E.~{Uysal},
  ``Reliable transmission of short packets through queues and noisy channels
  under latency and peak-age violation guarantees,'' \emph{{IEEE} J. Sel. Areas
  Commun.}, vol.~37, no.~4, pp. 721--734, 2019.

\bibitem{sun_poly_TIT_2020}
Y.~Sun, Y.~Polyanskiy, and E.~Uysal, ``Sampling of the {Wiener} process for
  remote estimation over a channel with random delay,'' \emph{{IEEE} Trans.
  Inf. Theory}, vol.~66, no.~2, pp. 1118--1135, 2020.

\bibitem{Telatar20_TIT}
E.~Najm, R.~Nasser, and E.~Telatar, ``Content based status updates,''
  \emph{{IEEE} Trans. Inf. Theory}, vol.~66, no.~6, pp. 3846--3863, 2020.

\bibitem{Yates20_TIT}
R.~Yates, ``The age of information in networks: Moments, distributions, and
  sampling,'' \emph{{IEEE} Trans. Inf. Theory}, vol.~66, no.~9, pp. 5712--5728,
  2020.

\bibitem{Yates19_TIT}
R.~Yates and S.~Kaul, ``The age of information: Real-time status updating by
  multiple sources,'' \emph{{IEEE} Trans. Inf. Theory}, vol.~65, no.~3, pp.
  1807--1827, 2019.

\bibitem{Modiano19_TNET}
I.~Kadota, A.~Sinha, and E.~Modiano, ``Scheduling algorithms for optimizing age
  of information in wireless networks with throughput constraints,''
  \emph{{IEEE/ACM} Trans. Netw.}, vol.~27, no.~4, pp. 1359--1372, 2019.

\bibitem{Pappas19}
A.~Kosta, N.~Pappas, A.~Ephremides, and V.~Angelakis, ``Age of information
  performance of multiaccess strategies with packet management,'' \emph{Journal
  of Communications and Networks}, vol.~21, no.~3, pp. 244--255, 2019.

\bibitem{Ephremides19_Infocom}
A.~{Maatouk}, M.~{Assaad}, and A.~{Ephremides}, ``Minimizing the age of
  information: {NOMA} or {OMA}?'' in \emph{Proc. IEEE INFOCOM Workshops}, Apr.
  2019.

\bibitem{Ephremides20_CSMA}
------, ``On the age of information in a {CSMA} environment,'' \emph{IEEE/ACM
  Trans. Netw.}, vol.~28, no.~2, pp. 818--831, 2020.

\bibitem{Yates17:AoI_SA}
R.~Yates and S.~Kaul, ``Status updates over unreliable multiaccess channels,''
  in \emph{Proc. IEEE ISIT}, Jun. 2017.

\bibitem{Yates20_ISIT}
------, ``Age of information in uncoordinated unslotted updating,'' in
  \emph{{Proc. IEEE ISIT}}, Jun. 2020.

\bibitem{Uysal21_AlohaThresh}
O.~T. Yavaskan and E.~Uysal, ``Analysis of slotted {ALOHA} with an age
  threshold,'' \emph{{IEEE} J. Sel. Areas Commun.}, vol.~39, no.~5, 2021.

\bibitem{Shirin22_TIT}
X.~Chen, K.~Gatsis, H.~Hassani, and S.~S. Bidokhti, ``Age of information in
  random access channels,'' \emph{{IEEE} Trans. Inf. Theory}, vol.~68, no.~10,
  pp. 6548--6568, 2022.

\bibitem{Ephremides20_TNET}
A.~Maatouk, S.~Kriouile, M.~Assaad, and A.~Ephremides, ``The age of incorrect
  information: A new performance metric for status updates,'' \emph{{IEEE/ACM}
  Trans. Netw.}, vol.~28, no.~5, pp. 2215--2228, 2020.

\bibitem{Soleymani20_valueInfo}
T.~Soleymani, J.~Baras, and S.~Hirche, ``Value of information in feedback
  control: Quantification,'' \emph{{IEEE} Trans. Autom. Control}, vol.~67,
  no.~7, pp. 3730--3737, 2020.

\bibitem{Rezaeian:TAC}
M.~Rezaeian, B.-N. Vo, and J.~S. Evans, ``The optimal observability of
  partially observable {Markov} decision processes: Discrete state space,''
  \emph{{IEEE} Trans. Autom. Control}, vol.~55, no.~12, pp. 2793--2798, 2010.

\bibitem{Liew22_TIT}
G.~Chen, S.~C. Liew, and Y.~Shao, ``Uncertainty-of-information scheduling: A
  restless multiarmed bandit framework,'' \emph{{IEEE} Trans. Inf. Theory},
  vol.~68, no.~9, pp. 6151--6173, 2022.

\bibitem{Liew22_arXiv}
\BIBentryALTinterwordspacing
G.~Chen and S.~C. Liew, ``An index policy for minimizing the
  uncertainty-of-information of {Markov} sources,'' 2022. [Online]. Available:
  \url{https://arxiv.org/abs/2212.02752v1}
\BIBentrySTDinterwordspacing

\bibitem{Cocco23_ITW}
G.~Cocco, A.~Munari, and G.~Liva, ``State estimation entropy for two-state
  {Markov} sources in slotted {ALOHA} random access channels,'' in \emph{Proc.
  IEEE Info. Theo. Workshop (ITW)}, Apr. 2023.

\bibitem{RU01a}
T.~Richardson and R.~Urbanke, ``The capacity of low-density parity-check codes
  under message-passing decoding,'' \emph{{IEEE} Trans. Inf. Theory}, vol.~47,
  no.~2, pp. 599--618, 2001.

\bibitem{Pfister2003}
H.~D. Pfister, ``On the capacity of finite state channels and the analysis of
  convolutional accumulate-$m$ codes,'' Ph.D. dissertation, University of
  California, San Diego, 2003.

\bibitem{gaoACC_2016}
X.~Gao, E.~Akyol, and T.~Başar, ``On remote estimation with multiple
  communication channels,'' in \emph{Proc. American Control Conference (ACC)},
  Jul. 2016.

\bibitem{Pappas23_arXiv}
\BIBentryALTinterwordspacing
M.~Salimnejad, M.~Kountouris, and N.~Pappas, ``Real-time remote reconstruction
  of a {Markov} source and actuation over wireless,'' 2023. [Online].
  Available: \url{https://arxiv.org/abs/2302.01132v1}
\BIBentrySTDinterwordspacing

\bibitem{Nayyar12_Auto}
A.~Nayyar, T.~Başar, D.~Teneketzis, and V.~V. Veeravalli, ``Optimal strategies
  for communication and remote estimation with an energy harvesting sensor,''
  \emph{{IEEE} Trans. Autom. Control}, vol.~58, no.~9, pp. 2246--2260, 2013.

\bibitem{chen_liao_2022}
X.~Chen, X.~Liao, and S.~S. Bidokhti, ``Real-time sampling and estimation on
  random access channels: Age of information and beyond,'' in \emph{Proc. IEEE
  INFOCOM}, May 2022.

\bibitem{rezaeianPercom2007}
M.~Rezaeian, ``Sensor scheduling for optimal observability using estimation
  entropy,'' in \emph{Proc. IEEE PERCOM Workshops}, Mar. 2007.

\bibitem{rezaeianArxiv2006}
\BIBentryALTinterwordspacing
------, ``Hidden {Markov} process: A new representation, entropy rate and
  estimation entropy,'' 2006. [Online]. Available:
  \url{https://arxiv.org/abs/cs/0606114}
\BIBentrySTDinterwordspacing

\bibitem{ambrosinoAllerton2008}
R.~Ambrosino, B.~Sinopoli, K.~Poolla, and S.~Sastry, ``Optimal sensor density
  for remote estimation over wireless sensor networks,'' in \emph{Proc. IEEE
  Allerton Conference on Communication, Control, and Computing}, Sep. 2008.

\bibitem{JACQUET2008203}
P.~Jacquet, G.~Seroussi, and W.~Szpankowskic, ``On the entropy of a hidden
  {Markov} process,'' \emph{Theoretical Computer Science}, vol. 395, no.~2, pp.
  203--219, 2008.

\bibitem{luo_TIT2009}
J.~Luo and D.~Guo, ``On the entropy rate of hidden {Markov} processes observed
  through arbitrary memoryless channels,'' \emph{IEEE Trans. on Info. Theory},
  vol.~55, no.~4, pp. 1460--1467, 2009.

\bibitem{coverThomas}
T.~Cover and J.~Thomas, \emph{Elements of Information Theory}.\hskip 1em plus
  0.5em minus 0.4em\relax John Wiley and Sons, 2006.

\bibitem{Munari21_TCOM_AoI}
A.~Munari, ``Modern random access: an age of information perspective on
  irregular repetition slotted {ALOHA},'' \emph{{IEEE} Trans. Commun.},
  vol.~69, no.~6, pp. 3572 -- 3585, 2021.

\bibitem{RU08}
T.~Richardson and R.~Urbanke, \emph{Modern coding theory}.\hskip 1em plus 0.5em
  minus 0.4em\relax Cambridge University Press, 2008.

\bibitem{BCJR}
L.~Bahl, J.~Cocke, F.~Jelinek, and J.~Raviv, ``Optimal decoding of linear codes
  for minimizing symbol error rate (corresp.),'' \emph{{IEEE} Trans. Inf.
  Theory}, vol.~20, no.~2, pp. 284--287, 1974.

\bibitem{RabinerHMM}
L.~Rabiner, ``{A tutorial on hidden Markov models and selected applications in
  speech recognition},'' \emph{Proc. {IEEE}}, vol.~77, no.~2, pp. 257--286,
  Feb. 1989.

\bibitem{JinQDE}
H.~Jin and T.~Richardson, ``A new fast density evolution,'' in \emph{Proc. IEEE
  Information Theory Workshop}, Mar. 2006.

\bibitem{Wang17_CommMag}
Y.~{Wang}, X.~{Lin}, A.~{Adhikary}, A.~{Grovlen}, Y.~{Sui}, Y.~{Blankenship},
  J.~{Bergman}, and H.~{Razaghi}, ``A primer on {3GPP} narrowband internet of
  things,'' \emph{IEEE Commun. Mag.}, vol.~55, no.~3, pp. 117--123, 2017.

\end{thebibliography}

\end{document}